\definecolor{clgray}{rgb}{0.25,0.25,0.25}
\theoremstyle{plain}
\newtheorem{theorem}{Theorem}[section]
\newtheorem{lemma}[theorem]{Lemma}
\theoremstyle{definition}
\newtheorem{definition}[theorem]{Definition}
\theoremstyle{remark}
\newcommand{\specialcell}[2][c]{%
  \begin{tabular}[#1]{@{}c@{}}#2\end{tabular}}
\definecolor{CLNOTE}{rgb}{0,0,.6}
\definecolor{CLNOTERED}{rgb}{.8,0,0}
\definecolor{CLTYPO}{rgb}{.8,0,.8}
\title{\bf 
Permutation Reconstruction from Differences}
\author{Marzio De Biasi\\
\small\tt marziodebiasi@gmail.com}
\date{{\small The Electronic Journal of Combinatorics {\bf 21(3)} (2014), \#P3.36}\\
\small Mathematics Subject Classifications: 68Q17}
\begin{document}

\maketitle

\begin{abstract}
We prove that the problem of reconstructing a permutation
$\pi_1,\dotsc,\pi_n$ of the integers $[1\dotso n]$ given the absolute
differences $|\pi_{i+1}-\pi_i|$, $i = 1,\dotsc,n-1$ is $\sf{NP}$--complete.
As an intermediate step we first prove the $\sf{NP}$--completeness of
the decision version of a new puzzle game that 
we call \emph{ Crazy Frog Puzzle}. 
The permutation reconstruction from differences is one of the simplest
combinatorial problems that have been proved to be computationally intractable.
\end{abstract}

\section{Introduction}

Permutation reconstruction has been studied as a variation of the
graph reconstruction problem that arose from an unsolved conjecture of
Ulam \cite{ulam60}. Given an unlabeled graph $A$, deleting
one of its vertex together with its incident edges in each possible way we obtain
the minors $A_1,...,A_n$. Ulam's conjecture says that given two graphs $A,B$
with $n > 3$ vertices, if there exists a bijection $\alpha:\{1,...,n\}\to\{1,...,n\}$
such that $A_i$ is isomorphic to $B_{\alpha(i)}$ then $A$ is isomorphic to $B$.
In the permutation reconstruction variant, we consider a permutation $p$ with
$n$ entries; we can delete $k$ of the entries in each possible way and renumber them with respect to order and obtain $\binom{n}{k}$ permutations of the numbers from $1$ to $n-k$ that are called $(n-k)$-minors.
Smith \cite{DBLP:journals/combinatorics/Smith06}
introduced the problem of reconstructing the original permutation $p$ from
the multiset $M_k(p)$ of its $(n-k)$-minors and looked at the number $N_k$ defined to
be the smallest number such that we can reconstruct permutations of length
$n \geq N_k$. Raykova \cite{DBLP:journals/combinatorics/Raykova06}
proved the existence of $N_k$ for every positive integer $k$
and gave an upper bound of $N_k < \frac{k^2}{4} + 2k + 4$ and a lower bound
$N_k > k + \log_2{k}$. Monks \cite{DBLP:journals/combinatorics/Monks09} studied
the reconstruction of a permutation $p$ from its set of cycle minors, 
where each $i$-th cycle minor is obtained deleting the entry $i$ from the
decomposition of $p$ into disjoint cycles and reducing each remaining
entry larger than $i$ by $1$.
He showed that any permutation of $\{1,2,...,n\}$ can be reconstructed from
its set of cycle minors if and only if $n \geq 6$.

In this paper we focus our attention to another variant that has an even
simpler formulation:
the problem of reconstructing a permutation $\pi_1,\dotsc,\pi_n$
of the integers $[1\dotso n]$ given the absolute
differences $|\pi_{i+1}-\pi_i|$, $i = 1,\dotsc,n-1$.
Up to our knowledge, the first formulation of the problem was proposed by
Mohammad Al-Turkistany in a question posted on MathOverflow.net,
a question and answer site for professional mathematicians \cite{blog:permrecons}.
We will prove that deciding if such a permutation exists is $\sf{NP}$--complete:
any given solution can be quickly verified in polynomial time, but
there is no known way to build an efficient polynomial time algorithm
that find a solution in the first place. And such an efficient algorithm
doesn't exist unless $\sf{P} = \sf{NP}$, which is the major open problem
in computer science.

In order to prove our result we first introduce
a new puzzle game, the \emph{Crazy Frog Puzzle},  with
the following rules: a frog is placed on a square
grid board; some cells of the grid are blocked, some are empty.
The frog must follow a given sequence of horizontal, vertical or
diagonal jumps of varying length; at every jump the frog can only decide to
follow the given direction or jump in the opposite direction.
For example, when facing an horizontal jump of length two, the frog
placed on cell $(x,y)$ can jump left on cell $(x-2,y)$ or right on cell
$(x+2,y)$.
The frog cannot jump outside
the board, on a blocked cell, or on a cell that has already been visited.
The aim of the game is to choose the correct directions of
the jumps and make the frog visit all the empty cells of the board
exactly once.
Figure~\ref{fig:crazyfrog} shows an instance of the Crazy Frog Puzzle and its solution.

\begin{figure}[htp]
\centering
\includegraphics[width=10cm]{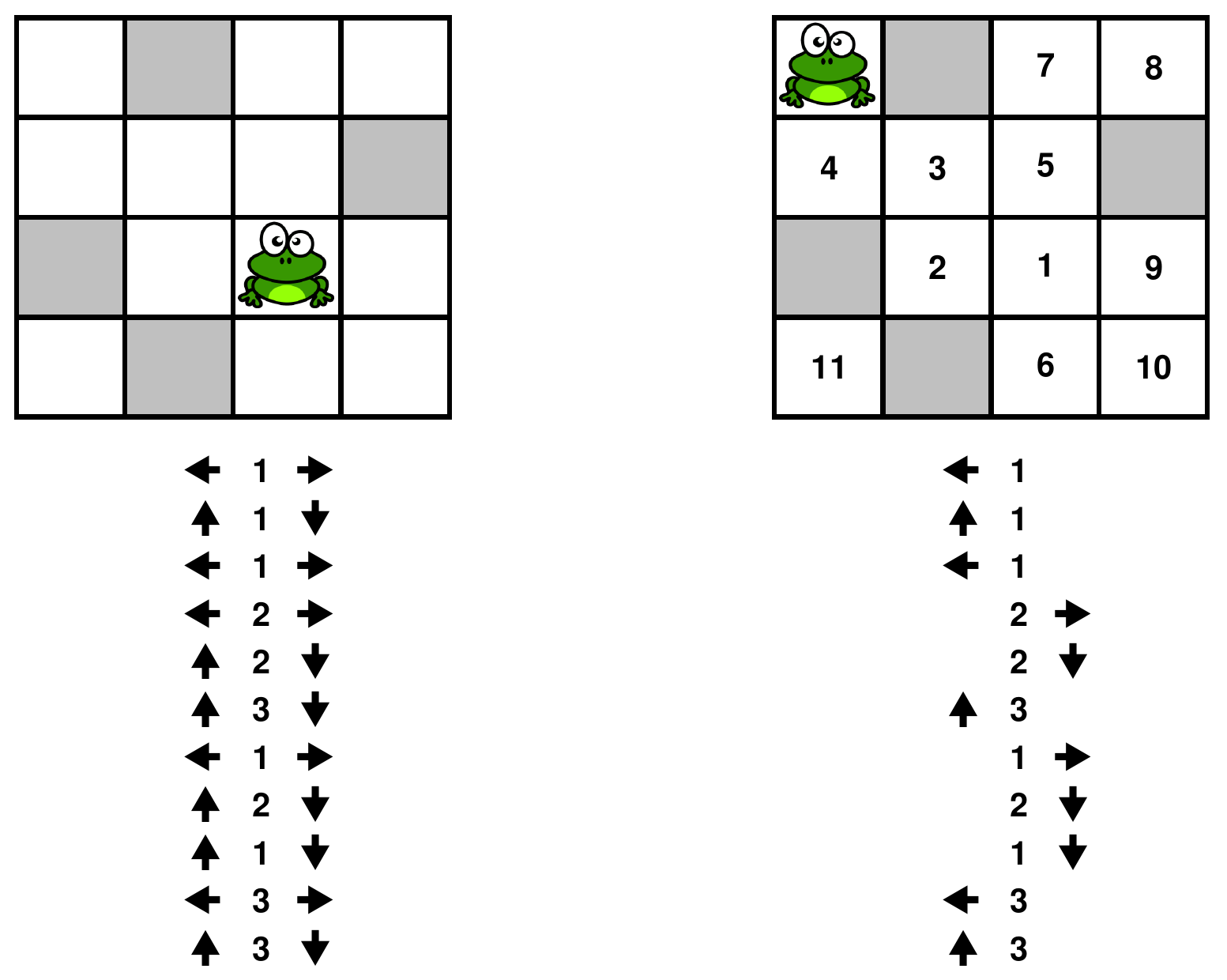}
\caption{An instance of the Crazy Frog Puzzle on the left and its solution on the right.}\label{fig:crazyfrog}
\end{figure}

In line with the recent interest in the complexity of puzzle games
\cite{DBLP:journals/icga/KendallPS08}
\cite{AlgGameTheory_GONC3},
we study how hard it can be to solve a Crazy Frog Puzzle game. 
In Section~\ref{sec:cfp} we formally define the \emph{decision} version
of the Crazy Frog Puzzle (CFP) and give some terminology; in Section~\ref{sec:npc}
we prove that deciding if a given Crazy Frog Puzzle has a solution
is $\sf{NP}$--complete. In Section~\ref{sec:1d} we prove
that CFP remains $\sf{NP}$-complete even
if the board is 1-dimensional. Finally in Section~\ref{sec:permutation}
we prove that the problem remains $\sf{NP}$--complete even if
the initial board has no blocked cells and we will show that
the 1-dimensional CFP without blocked cells is equivalent
to the permutation reconstruction from differences problem.

\section{Crazy Frog Puzzle}
\label{sec:cfp}

We define the decision problem \cite{sipser} that asks whether a given Crazy Frog Puzzle 
has a solution or not:

\begin{definition}[\sc{Crazy Frog Puzzle}]~\\~\\
{\bf Input}: An $n \times n$ partially filled board: some cells are {\em empty} some cells are {\em blocked}, a starting cell $c_0 = (x_0, y_0)$ and a sequence of $m$ {\em jumps} $\Delta_1,\Delta_2,\dots,\Delta_m$, $\Delta_i = (dx_i, dy_i), -n \leq dx_i, dy_i \leq n$ ($m$ equals the number of the initial empty cells).\\

\noindent{\bf Question}: Does there exist a sequence of integers $(s_1,s_2,\dots,s_m), s_i \in \{-1,+1\}$ such that if a frog is placed on the starting cell, it can {\em visit} (we will equivalently use the term {\em fill}) every empty cell of the board exactly once following the sequence of jumps:
$$(x_i, y_i) = (x_{i-1} + s_i * dx_i, y_{i-1} + s_i * dy_i)$$
(i.e. the frog can choose only the \emph{direction} of the given jumps)?
The frog cannot jump outside the board, on a blocked cell or on an already
visited cell.
\end{definition}

\subsection{Terminology}

We briefly introduce the terminology used in the next sections.

\begin{description}
\item[Horizontal jump]: a jump of the form $(\Delta x,0)$ ({\em horizontal step} if $\Delta x=1$);
\item[Vertical jump] : a jump of the form $(0,\Delta y)$ ({\em vertical step} if $\Delta y=1$);
\item[Diagonal jump]: a jump of the form $(\Delta x,\Delta x)$ ({\em diagonal step} if $\Delta x=1$);
\item[Sequence of jumps]: a (sub)sequence $\Delta_i,\Delta_{i+1},\dots,\Delta_m$ of jumps that are part of the input; the frog at each step $\Delta_i$ must choose a direction ($s_i \in \{\pm 1\}$) and make the jump ($+\Delta_i$ or  $-\Delta_i$).
\end{description}

When there is no ambiguity  horizontal/vertical/diagonal jumps are abbreviated with a single letter
or a single number; for example if we are describing a horizontal sequence of jumps, we can write $2,3,2$ instead of $(2,0),(3,0),(2,0)$, or write $x_1,x_2,\dots,x_j$ instead of $(x_1,0),(x_2,0),\dots,(x_j,0)$.

\begin{description}
\item [Line]: a row of the board; we use a string to represent it using
these characters: $B$=blocked cell, $E$=empty cell, $F$=frog, $V$=visited cell; repeated cells can be represented with power notation (e.g. $B^2FE^3 = BBFEEE$);
\item [Strip]: two or more lines;
\item[Configuration]: the status of the board cells ($B$,$E$ or $F$) and the sequence of the next jumps; a configuration is {\em valid}, if the frog can complete the sequence of jumps choosing a $\pm 1$ direction for each of them;
\item[Gadget]: the configuration of a part of the board that have a particular role in the  reduction; a gadget can have one or more {\em entrance cells}
and one or more {\em exit cells}.
A {\em (valid) traversal} of the gadget is a sequence 
of jump directions that can lead the frog from an entrance cell to an exit cell.
We will use capital letters: (e.g. $L,C,S$) to indicate gadgets, capital letters
with a tilde (e.g $\tilde{L},\tilde{C},\tilde{S})$ to indicate jump sequences.

\end{description}

In the figures we will use the following notation: gray cells represent blocked cells; a cell with a $F$ represents the frog position; a cell with a $t$ represents
the target cell; a cell with a $V$ represents 
an already visited cell; numbered cells represent a (valid) sequence of jumps that the frog can make. The horizontal coordinates $x = 0,1,2,\ldots$
are from left to right, the vertical coordinates $y = 0,1,2,\ldots$ are
from top to bottom.

\section{$\sf{NP}$--Completeness}
\label{sec:npc}

We will prove that the CFP problem is $\sf{NP}$--hard giving a polynomial time
reduction from
the Hamiltonian path problem on grid graphs \cite{ips82}.
First we underline a general property that we will use in the gadget construction.

\begin{lemma}[Gadget construction]
Given a sequence of jumps $(x_1,y_1),\dotsc,(x_m,y_m)$ and
a $w \times h$ rectangular area $R$ of the board, we can construct a corresponding gadget
in which all
jumps of the sequence must be made inside it and the frog can exit it only at the end.
\end{lemma}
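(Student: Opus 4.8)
The plan is to build the gadget by surrounding the rectangular area $R$ with a wide belt of blocked cells so that, in effect, the only admissible positions for the frog while the sequence $(x_1,y_1),\dotsc,(x_m,y_m)$ is being executed lie inside $R$. First I would enlarge the board: pad $R$ on all four sides by a margin of width $N := \sum_{j=1}^m (|x_j| + |y_j|)$, and fill every cell of this margin with blocked cells, except that I reserve a single column (or a short corridor) of empty cells leading out of $R$ at a designated exit cell, to be used only after the final jump $\Delta_m$. The key quantitative point is that from any cell of $R$, the partial sums $\sum_{j\le k}(\pm x_j, \pm y_j)$ can move the frog by at most $N$ in each coordinate, so no choice of directions $s_1,\dotsc,s_k$ for the prefix of the sequence can ever land the frog outside the padded region; since that region (minus $R$ and the reserved exit corridor) is entirely blocked, every intermediate position is forced to lie in $R$.

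The second step is to check that the exit corridor does not give the frog an unintended escape route partway through the sequence. Here I would place the corridor so that reaching it requires a jump whose displacement matches $\Delta_m$ exactly — for instance, make the exit cell the unique empty cell at offset $(\pm x_m, \pm y_m)$ from some boundary cell of $R$, with all other cells at distance $\le N$ blocked. Because the corridor is one cell wide and leads immediately back into a blocked wall except along the outward direction, the only jump in the whole sequence that can legally be made into it is the last one; any earlier attempt to jump into it would either overshoot into blocked cells on the next jump or be impossible because the intermediate displacement does not reach it. One should also verify the converse direction: any frog trajectory confined to $R$ that completes all $m$ jumps can be followed by the exit jump, so the gadget genuinely behaves like ``$R$ with a single exit.''

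The main obstacle — and the step that needs the most care — is making ``the frog can exit it only at the end'' precise and airtight. The displacement bound $N$ immediately confines the frog to the padded rectangle, but ruling out a premature departure through the exit corridor is the delicate part: one has to argue that no prefix of length $k<m$ can place the frog in the corridor in a way that lets it continue, which forces a careful choice of the corridor's geometry relative to the jump vectors. A clean way to sidestep subtle case analysis is to route the exit corridor through a long sequence of blocked-cell ``teeth'' so that entering it from any side other than after $\Delta_m$ leaves no legal continuation; then the confinement claim reduces to the single displacement inequality $\bigl|\sum_{j\le k} s_j x_j\bigr| \le N$ and $\bigl|\sum_{j\le k} s_j y_j\bigr| \le N$, which is immediate. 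The remaining details — that $R$ can be positioned with enough board to spare, and that the construction is polynomial in the input size since $N$ and the padding are bounded by $m\cdot n$ — are routine.
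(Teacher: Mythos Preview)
Your padding idea is on the right track, but you have overcomplicated the construction and left a genuine gap in the exit mechanism.

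First, the padding width $N = \sum_j (|x_j|+|y_j|)$ is far larger than needed, because the confinement argument is not about cumulative displacement at all. If the border has horizontal width $\max_i |x_i|$ and vertical width $\max_i |y_i|$, then from any cell of $R$ a \emph{single} jump $(x_i,y_i)$ lands either back in $R$ or on a blocked border cell, which is illegal; by induction the frog never leaves $R$ during the sequence. That is the paper's argument, and it eliminates the partial-sum bookkeeping entirely.

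Second, and more importantly, your ``exit corridor'' of empty cells inside the border is the source of every difficulty you flag, and the fixes you sketch (matching the corridor to $\Delta_m$, adding ``teeth'') do not work in general: nothing in the hypotheses prevents some earlier $\Delta_k$ from coinciding with $\Delta_m$, you have not said from which cell of $R$ the exit offset is measured (the frog could be anywhere in $R$ after step $m-1$), and you never specify how the frog \emph{enters} the gadget in the first place. The paper avoids all of this by keeping the border \emph{entirely} blocked---no corridor whatsoever---and handling entrance and exit through the jump sequence rather than through board geometry: one prepends a vertical jump $y_{In} > \max_i |y_i|$ and appends a vertical jump $y_{Out} > \max_i |y_i|$. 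Since every internal jump has vertical component at most $\max_i |y_i|$, no internal jump can cross the border, while the deliberately oversized $y_{In}$ and $y_{Out}$ necessarily do. The ``exit only at the end'' claim then follows immediately, with no case analysis.
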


\begin{proof}
It is sufficient to extend the area adding a $B^{\max\{x_i\}}$ border of
blocked cells on the left/right, a $B^{\max \{y_i\}}$ 
border of blocked cells on the top/bottom.
Then we can make the frog enter the gadget with a vertical jump $y_{In} > \max \{y_i\}$ and leave it
with another vertical jump $y_{Out}> \max \{y_i\}$. The extended sequence of jumps for the traversal is:

$$\dots,y_{In}, (x_1,y_1),\dots,(x_n,y_n), y_{Out},\dots$$
\end{proof}

Note that \emph{a)} if $R$ contains $m+1$ empty cells (the number of jumps
is equal to the number of empty cells minus one), then
a valid traversal of the gadget implies that the frog must visit (fill) all its
empty cells; \emph{b)} instead of vertical jumps we can use long enough horizontal or diagonal jumps. Figure~\ref{fig:gadgetcons} shows an example
of a $5\times 5$ partially filled region and a sequence of 6 jumps;
the region can be embedded in a $7 \times 7$ gadget that can be traversed in 4
different ways; all traversals completely fill the original inner region.   

\begin{figure}[htp]
\centering
\includegraphics[width=12cm]{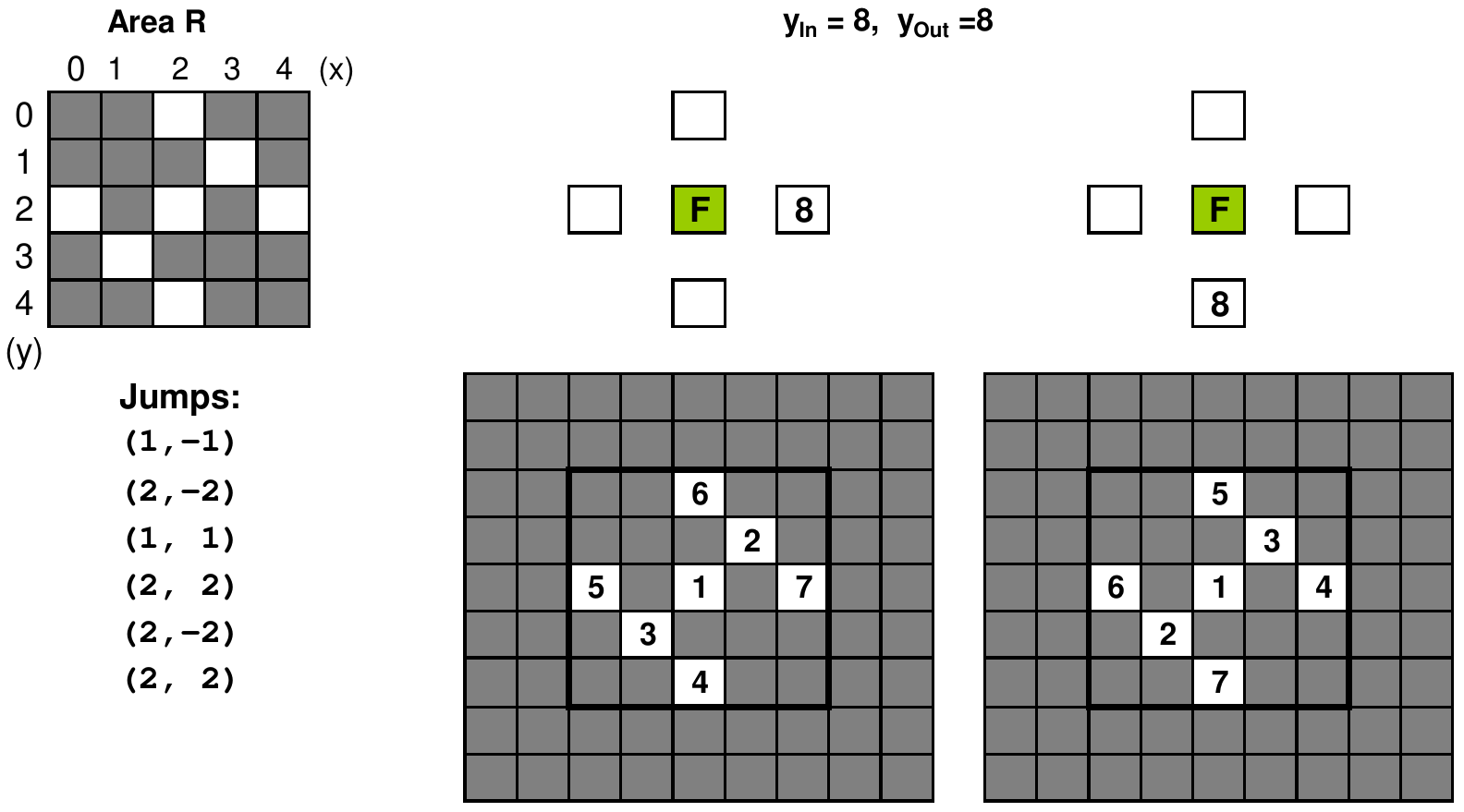}
\caption{A $5\times 5$ partially filled region and a sequence of 6 jumps;
the region can be embedded in a $7 \times 7$ gadget; two of
the four possible traversals are showed on the right.}\label{fig:gadgetcons}
\end{figure}

\subsection{Reduction overview}

The reduction is from the $\sf{NP}$--complete Hamiltonian path problem on grid graphs 
(which may also contain holes) \cite{ips82}. Let $G^\infty$ be the infinite graph whose vertex set consists
of all points of the plane with integer coordinates
and in which two vertices are connected if and only if the (Euclidean) distance between
them is equal to 1. A \emph{grid graph} is a finite, node--induced subgraph of $G^\infty$.

Given an $m \times m$ grid graph $G$ in which $|V|=n$,  $s,t \in V$ are the source and target nodes; and the coordinates of node $u_i, 1\leq i  \leq n$ on the grid are $(x_{u_i},y_{u_i})$.

Pick the first $k \geq 2$ such that $2^k \geq 4m$ and build a
$2^k-1 \times 2^k-1$ board $R$  with all the cells blocked except
the cells at coordinates $(4x_{u_i},4y_{u_i})$   corresponding to the {\em nodes} of
$G$ and the empty
{\em target cell} $(4x_{t}+1,4y_{t})$ one step aside from the node $t$.
We call this part of the board the {\em graph area}. The frog is initially positioned on cell $(4x_s, 4y_s)$.

For brevity, throughout the paper we will denote its size with $w=2^k-1$ and
$v = 2^{k-1} = \lceil w/2 \rceil$.

We extend the board at the bottom with $n-1$ {\em edge gadgets} $L_1,L_2,\dots,L_{n-1}$; each edge gadget $L_i$
has an associated {\em cleanup gadget} $C_i$ placed on its right. We will generate a fixed sequence of jumps that will force the following logical \emph{ phases}:

\begin{itemize}
\item ${[\tilde{L}_1]}$ enter gadget $L_1$, choose one of the four directions up, down, left, right and return to cell
$(x_0 \pm 4, y_0)$ or $(x_0 , y_0\pm 4)$ in the graph area;
\item ${[\tilde{L}_2]}$ enter gadget $L_2$, choose one of the four directions and return to cell
$(x_1 \pm 4, y_1)$ or $(x_1 , y_1\pm 4)$ in the graph area;
\item \dots
\item ${[\tilde{L}_{n-1}]}$ enter gadget $L_{n-1}$, choose one of the four directions and return to cell
$(x_{n-1} \pm 4, y_{n-1})$ or $(x_{n-1} , y_{n-1}\pm 4)$ in the graph area;
\item ${[\tilde{T}]}$ jump to the target cell $(x_t,y_t)$;
\item enter the cleanup gadgets area;
\item ${[\tilde{C}_1]}$ completely fill the lines of the edge gadget $L_1$ that have an already  visited cell (visited during phase $\tilde{L}_1$); then completely fill the lines  that are still empty;
\item \dots
\item ${[\tilde{C}_{n-1}]}$ completely fill the lines of the edge gadget $L_{n-1}$ that have an already  visited cell (visited during phase $\tilde{L}_{n-1}$); then completely fill the lines  that are still empty.
\end{itemize}

An outline of the whole board is shown in Figure~\ref{fig:outline}.


\subsection{Edge gadgets}

Each edge gadget $L_i$ is a rectangular area that has the same width $w$ of the graph area, and $7w$ lines:
$2w$ blocked lines; a first {\em top inner strip}
of height $w$ in which even lines are empty and odd lines are blocked;
$w$ blocked lines;
a second {\em bottom inner strip} of height $w$
 in which even lines are empty and odd lines are blocked;
and finally $2w$ blocked lines. If the gadget is positioned at
row $l_i$ its structure is:

\begin{figure}[H]
\centering
\includegraphics[width=14cm]{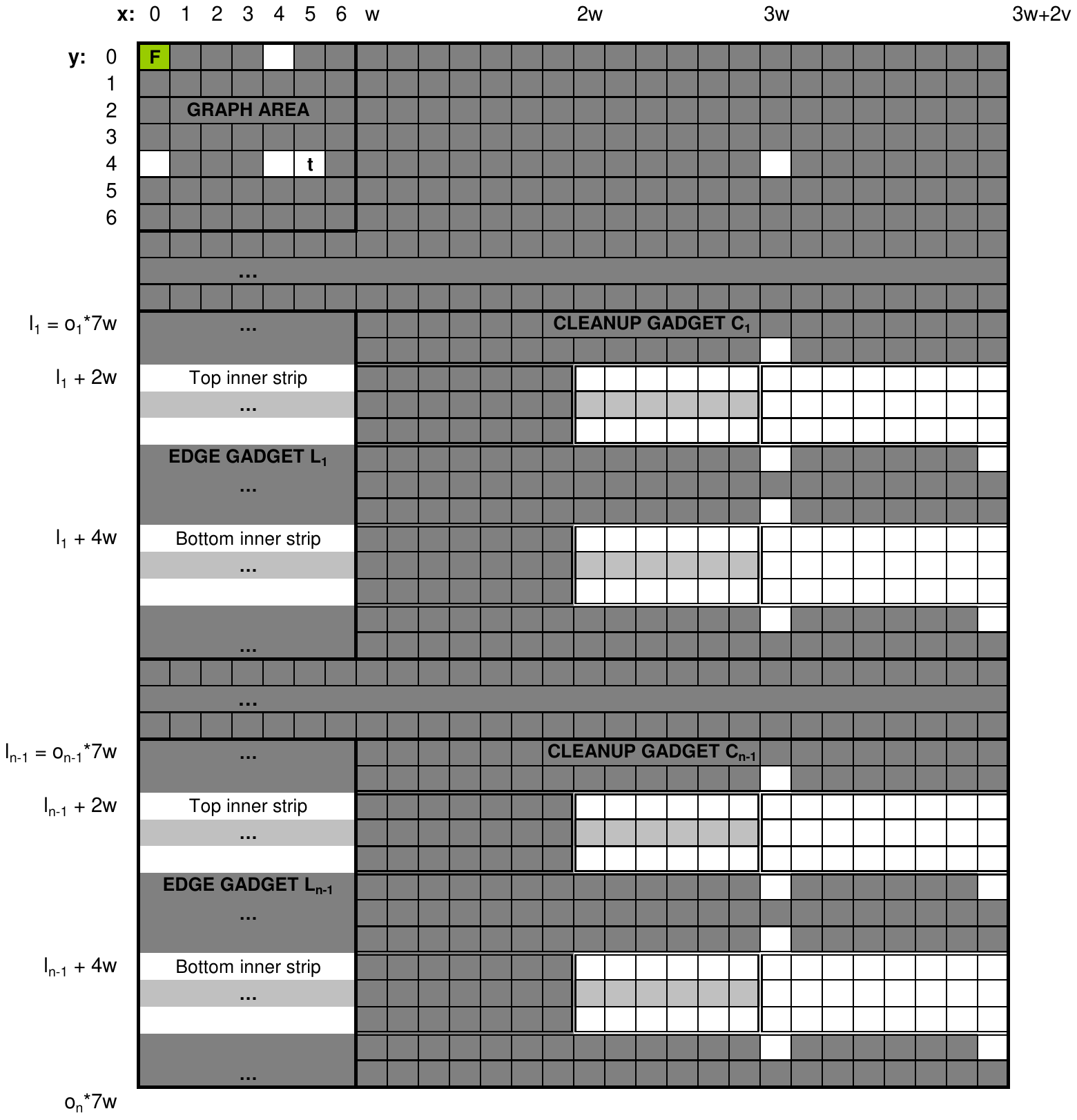}
\caption{An outline of the board generated by the reduction.}\label{fig:outline}
\end{figure}

\begin{center}
\begin{tabular}{ r c l  }
row \#& cells & repetitions\\\hline
$l_i$: & $B^w$ & $\times 2w$ \\\hline
\specialcell{$l_i+2w$:\\~} & \specialcell{$E^w$\\$B^w$} & $\times \lfloor \frac{w}{2} \rfloor$\\
 & $E^w$ & $\times 1$\\\hline
$l_i + 3w$: & $B^w$ & $\times w$\\\hline
\specialcell{$l_i+4w$:\\~} & \specialcell{$E^w$\\$B^w$} & $\times \lfloor \frac{w}{2} \rfloor$\\
 & $E^w$ & $\times 1$\\\hline
$l_i+5w$: & $B^w$ & $\times 2w$ \\\hline\\
\end{tabular}
\end{center}

We make the frog enter an edge gadget from the graph area
 with a vertical jump $J_{L_i}$  on an empty cell of the upper inner strip,
and leave it from the bottom inner strip
with a vertical jump $J'_{L_i} = J_{L_i}+2w$ that make it return to
the graph area. 

The vertical positions $l_i$ of the edge gadgets must be chosen in such a way 
that the frog cannot leave one of them and directly jump to another edge gadget,
but is forced to return to the graph area.
This is achieved using the vertical positions 
$l_i = o_i * 7w$,  for odd $o_i \geq 1$,
and setting $J_{L_i} = l_i + 2w$ and $J'_{L_i} =  J_{L_i}+2w = l_i + 4w$.
Indeed if the frog begins on $(4x,4y)$ in the graph area for some $k$,
it'll end up at vertical coordinate $l_i + 4w + 4y + 4z$ for $z$ in $\{-1,0,1\}$
at the end of the traversal of gadget $L_i$. At this point the next vertical jump
of $(0,l_i + 4w)$ shouldn't allow the frog to jump further down to another
edge gadget; i.e. $l_i + 4w + 4y + 4z + l_i + 4w = 2l_i+8w+4y+4z$
should be different from $l_j + 2w + 2a$ and $l_j + 4w + 2b$ for all $i \neq j$ and
$0 \leq a,b < w/2$; but the inequalities $2l_i+8w+4y+4z \neq l_j + 2w + 2a$ and
$2l_i+8w+4y+4z \neq l_j + 4w + 2b$ hold because the left-hand side is even,
and the right-hand side is an odd number ($l_j = o_j * 7w$ is odd because both $o_j$ and
$w$ are odd) plus an even number.


The sequence of jumps inside each edge gadget is:

$$\tilde{L}_{seq} = (2,2), (0,2w), (2,-2)$$

The first jump is a diagonal jump that must be made in the top inner strip,
the second jump forces the frog to jump to the bottom inner strip,
the third jump is a diagonal jump that must be made in the bottom inner strip.
The traversal of gadget $L_i$ is forced with the sequence of jumps:

$$\tilde{L}_{i} = (0,J_{L_i}) , \tilde{L}_{seq} , (0,J'_{L_i})$$

Suppose that the frog is on cell $(x,y)$ and enters the gadget $L_i$ from the top with the vertical jump $(0,J_{L_i})$:
after the $\tilde{L}_{seq}$ jumps it can only use the final vertical
jump $(0,J'_{L_i})$
to return to the graph area, and its final position must be one of the cells: $(x+4,y), (x-4,y), (x,y-4), (x,y+4)$.
After each traversal only four cells of the inner strips are visited.  
Figure~\ref{fig:edgegad3} shows an example of a $15 \times 60$ edge gadget ($w=15$) and its possible traversals.


We can define the first part of the jump sequence of the CFP in this way:

\begin{center}
\begin{tabular}{c | c | p{5cm} }
Phase & Jumps & \\\hline
$\tilde{L}_1$ & $(0,J_{L_1}), \tilde{L}_{seq}, (0,J'_{L_1})$ & enter edge gadget $L_1$, traverse it and return to graph area;\\
$\tilde{L}_2$ & $(0,J_{L_2}), \tilde{L}_{seq}, (0,J'_{L_2})$ & enter edge gadget $L_2$, traverse it and return to graph area;\\
\dots  & \dots & \dots\\
$\tilde{L}_{n-1}$ & $(0,J_{L_{n-1}}), \tilde{L}_{seq}, (0,J'_{L_{n-1}})$ & enter edge gadget $L_{n-1}$, traverse it and return to graph area;\\
$\tilde{T}$ & (1,0) & jump to target cell $t$.\\\hline
\end{tabular}
\end{center}

\begin{figure}[H]
\centering
\includegraphics[width=\textwidth]{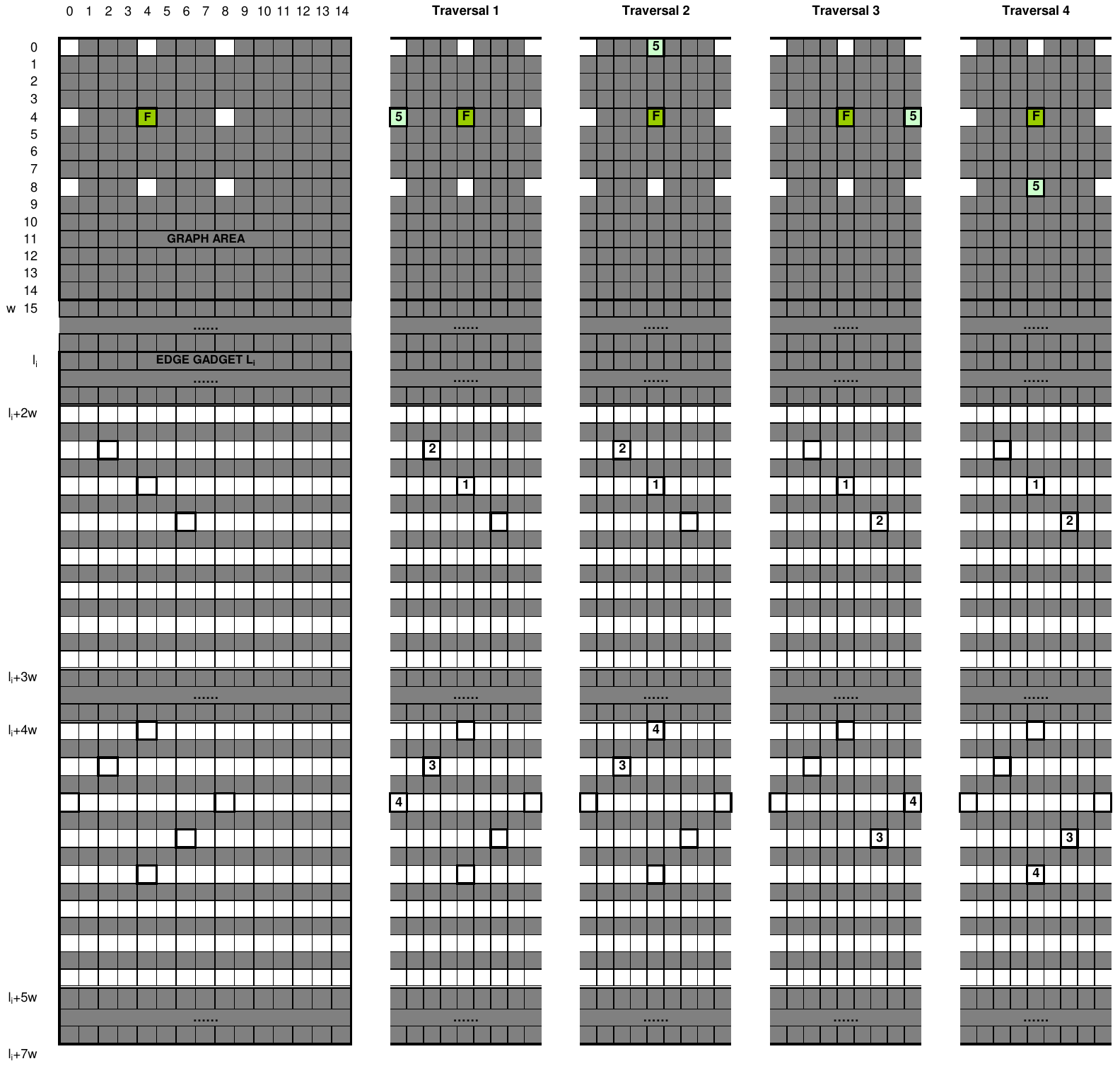}
\caption{The edge gadget and its four possible traversals.}\label{fig:edgegad3}
\end{figure}

Note that the final odd horizontal step (the only odd horizontal step), forces the frog to choose a path in the graph area in which
the final cell is the one corresponding to node $t$. Furthermore if the frog
is on cell $(x_i,y_i)$ and traverses the edge gadget $L_i$ it must, by construction,
be in one of the four adjacent cells $(x_i\pm 4,y_i),(x_i,y_i \pm 4)$ and that
cell must be empty (i.e. correspond to an unvisited node),
so the sequence of cells visited in the graph area must 
correspond to an Hamiltonian path from $s$ to $t$ on the original graph $G$.
So the following lemma holds:

\begin{lemma}The frog can reach cell $(x_t,y_t)$ from its initial position
$(x_s,y_s)$ if and only if there is an Hamiltonian path from $s$ to $t$ in the original graph $G$.
\end{lemma}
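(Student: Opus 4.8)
The plan is to prove the two directions of the equivalence separately, using the gadget machinery already assembled. The key structural fact established above is that the fixed jump sequence $\tilde{L}_1,\dots,\tilde{L}_{n-1},\tilde{T}$ forces the frog to alternate: each $\tilde{L}_i$ sends it down into edge gadget $L_i$ via the vertical jump $(0,J_{L_i})$, and the analysis of the positions $l_i = o_i\cdot 7w$ shows that after the intra-gadget jumps $\tilde{L}_{seq}$ the frog is pinned to a cell of the bottom inner strip at vertical coordinate $l_i+4w$ shifted by $4y+4z$ with $z\in\{-1,0,1\}$, from which the only legal continuation is the return jump $(0,J'_{L_i})$ back into the graph area. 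I would first record this as the controlling invariant: \emph{between consecutive phases the frog sits on a graph-area cell $(4x_{u},4y_{u})$, and phase $\tilde{L}_i$ moves it from one graph-area node-cell to a neighbouring one (in the $G$-adjacency sense, since the $\pm 4$ shifts correspond exactly to edges of the grid graph $G$)}, and each such move consumes exactly four previously-empty cells of $L_i$.

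For the forward direction, suppose the frog reaches $(x_t,y_t)$. Reading off the sequence of graph-area cells it occupies before $\tilde{L}_1$, between $\tilde{L}_i$ and $\tilde{L}_{i+1}$, and finally just before $\tilde{T}$, I get a walk in $G$ of length $n-1$ starting at $s$; the final odd horizontal step $(1,0)$ of $\tilde{T}$ can only land on the target cell $(4x_t+1,4y_t)$, which forces the $(n-1)$-st graph-area cell to be the node $t$. Because the frog never revisits a cell, and node-cells are only entered during these phase-boundary moments, all $n$ visited node-cells are distinct, so the walk is an $n$-vertex path from $s$ to $t$ in $G$: a Hamiltonian path. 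For the converse, given a Hamiltonian path $s=v_0,v_1,\dots,v_{n-1}=t$ in $G$, I would exhibit the direction choices: the $i$-th move $v_{i-1}\to v_i$ is one of the four grid directions, which fixes the sign $s$ of the central $(0,2w)$-irrelevant part and, more precisely, determines the signs on the two diagonal jumps $(2,2),(2,-2)$ and on the entering/leaving vertical jumps so that the net effect of $\tilde{L}_i$ is the displacement $4(v_i-v_{i-1})$; then the single jump $\tilde{T}=(1,0)$ (taken with sign $+1$) moves from $(4x_t,4y_t)$ to the target cell. One must check these moves are all legal — staying on the board, never on a blocked cell, never on an already-visited cell — which follows from the gadget-construction lemma (the $B$-borders confine $\tilde{L}_{seq}$ inside $L_i$) together with the fact that $L_1,\dots,L_{n-1}$ are disjoint gadgets whose inner strips start fully empty, so the four cells touched in $L_i$ during $\tilde{L}_i$ are fresh, and the node-cells visited in the graph area are exactly $v_0,\dots,v_{n-1}$, pairwise distinct by the path property.

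The main obstacle is the careful bookkeeping of \emph{which} cell of the inner strip the frog lands on after $\tilde{L}_{seq}$ and verifying that the only escape is the prescribed return jump — i.e. ruling out that some clever alternative sign pattern lets the frog skip from $L_i$ directly to another gadget or back into the graph area at a non-node cell. The parity argument in the excerpt (left side even, right side odd) handles the "jump to another edge gadget" escape; I would need to complement it with the observation that after the two diagonal jumps the frog's vertical coordinate lies strictly inside the blocked-bordered band of $L_i$, so no vertical jump shorter than $J'_{L_i}$ reaches an empty cell, and $J'_{L_i}$ lands it back exactly on a node-cell $(4x,4y\pm 4)$ or $(4x\pm 4,4y)$ — this is precisely the content of Figure~\ref{fig:edgegad3}. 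Everything else is routine once this confinement is pinned down; the statement then follows by combining the two directions.
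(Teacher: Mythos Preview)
Your proposal is correct and follows essentially the same approach as the paper: the paper's argument (given in the paragraph immediately preceding the lemma rather than as a separate proof) rests on exactly the two observations you isolate, namely that each traversal of an edge gadget $L_i$ forces a move from a node-cell to one of its four $\pm 4$-neighbours (hence an edge of $G$ to an unvisited node), and that the single odd horizontal step $(1,0)$ pins the last node-cell to be that of $t$. Your write-up is more explicit than the paper's---you spell out the converse direction by exhibiting the sign choices from a given Hamiltonian path, and you flag the confinement/parity bookkeeping as the only nontrivial verification---but the underlying strategy is identical.
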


At the end of the graph area traversal, most cells of the edge gadgets are still empty, so we must extend the jump sequence to let the frog visit all of them and completely fill the board.

The cleanup gadgets are more complicated because they must allow the frog
to fill both the lines of the edge gadgets that has a single blocked cell, and the lines
of the edge gadgets that have been left empty.

\subsection{Cleanup gadgets}

Every cleanup gadget $C_i$ 
has two similar \emph {strip cleanup gadgets} $S_i^{a}$,$S_i^{b}$  one for the top 
and one for the bottom inner strip
of the corresponding edge gadget $L_i$.
The strip cleanup gadget is placed
on the right of the corresponding inner strip, at coordinate
$(w,l_i+2w)$ for top inner strips ($(w,l_i+4w)$ for bottom inner strips), and is
 a $3w+2v \times w$ rectangular area of the board
(note that $v = \lceil w/2 \rceil$ is simply the number of even rows
in the inner strip)
with the following structure:

\begin{center}
\begin{tabular}{ r c l  }
row \#& cells & ~ \\\hline
even rows: & $B^w E^w E^{2v}$ & ~ \\
odd rows: & $B^w B^w E^{2v}$ & ~ \\

\end{tabular}
\end{center}
Furthermore, outside the gadget at coordinates $(3w,l_i+2w-1),(3w,l_i+3w),(3w+2v-1,l_i+3w)$ there are three empty cells that are the entrance
and exit cells of the gadget $S_i$. Figure~\ref{fig:cleanup2} shows the
outline of
a strip cleanup gadget associated to a $7 \times 7$ ($w=7$) inner strip. 

\begin{figure}[htp]
\centering
\includegraphics[width=\textwidth]{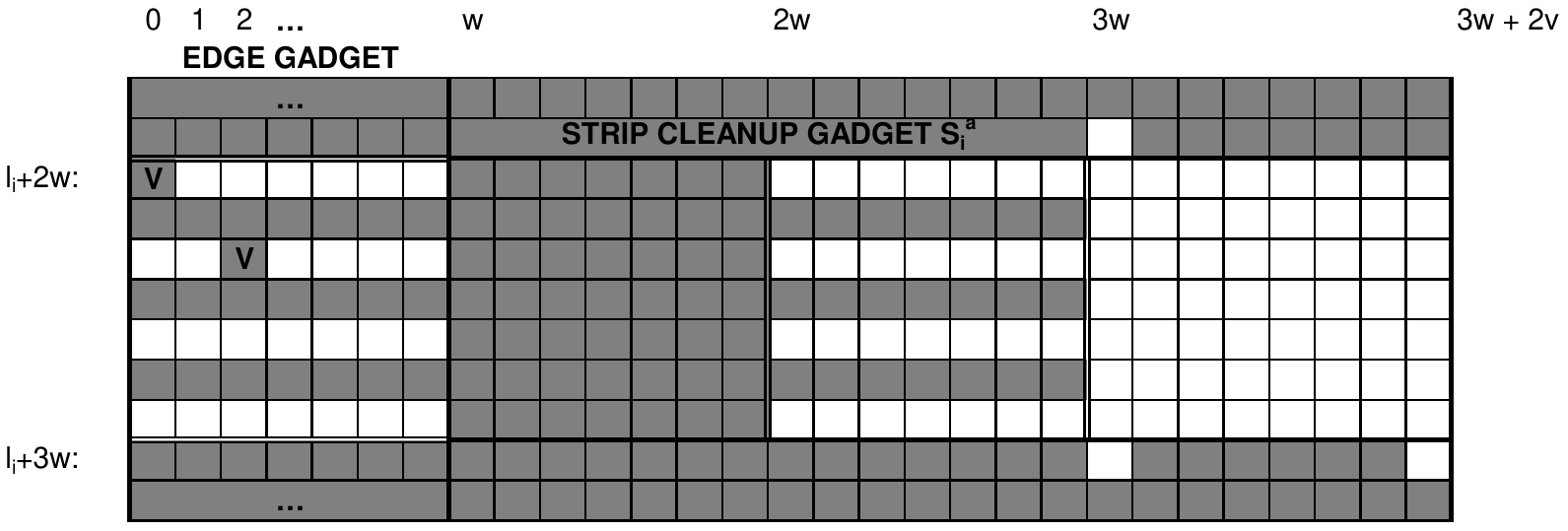}
\caption{A strip cleanup gadget $S_i$ associated to the top inner strip of the edge gadget $L_i$ (assuming $w=7$).}\label{fig:cleanup2}
\end{figure}

The jump sequence $\tilde{C_i}$ using for traversing the cleanup gadget
$C_i$ is:
$$\tilde{C_i} = \tilde{S_i},(-2v+1,),(0,w-1),\tilde{S_i}$$
where  $\tilde{S_i}$ is the jump sequence for traversing a strip
cleanup gadget and $(-2v+1,),(0,w-1)$ are the jumps from the exit of the top
strip cleanup gadget to the entrance of the bottom strip cleanup gadget. 

The jump sequence $\tilde{S_i}$ that allows the frog to traverse the strip
cleanup gadget $S_i$ has the following components:

\begin{itemize}

\item $v$ \emph{vertical selector sequences};
that allow the frog to choose an even row of the inner strip;

\item $2$ \emph{horizontal hole sequences}
that allow the frog to completely fill the two even rows of the inner strip that have an already visited cell;

\item $v - 2$ \emph{horizontal fill sequences}
 that allow the frog to completely fill the
remaining $v-2$ even empty rows of the inner strip.

\end{itemize}

The horizontal hole and fill sequences are embedded in the
vertical selector sequences. 

\subsection{Horizontal sequences}

We see how to build the two horizontal hole sequences that can
be used to fill the lines of the edge gadgets that are unvisited
or that contain an already visited cell.

\begin{lemma}[Horizontal fill sequence]
If $w = 2^k - 1$,
starting from the line:
$$E^wB^wE^wFE$$ we can force the frog to correctly visit and fill all the empty cells and finally jump on the rightmost cell;
i.e. the final configuration is: $$V^wB^wV^wVF$$
\end{lemma}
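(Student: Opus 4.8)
The plan is to describe explicitly a sequence of horizontal jumps that, starting from the configuration $E^wB^wE^wFE$, sweeps out all the empty cells of the left block $E^w$ and the middle block $E^w$, ending on the rightmost cell. Let me set up coordinates: label the cells $0,1,\dots,3w+1$ from left to right, so cells $0,\dots,w-1$ are the first empty block, cells $w,\dots,2w-1$ are blocked, cells $2w,\dots,3w-1$ are the second empty block, cell $3w$ is the frog's starting cell $F$, and cell $3w+1$ is the final empty cell. The frog must fill cells $0,\dots,w-1$, $2w,\dots,3w-1$, and $3w$, $3w+1$ (that is $2w+2$ cells, hence $2w+1$ jumps), never landing on $w,\dots,2w-1$ and never revisiting.

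First I would handle the right pair: from $3w$ the frog cannot go left onto a blocked cell region boundary issue — actually from $3w$ a jump of length $1$ goes to $3w+1$ (the $E$ at the far right) or to $3w-1$ (inside the second empty block). The key idea is the binary/doubling structure hidden in $w=2^k-1$: a block of $2^k-1$ contiguous cells can be completely traversed by a frog that only chooses jump directions, using jumps of lengths that are (roughly) powers of two, because a "ruler sequence" of jumps $1,2,1,4,1,2,1,\dots$ (the binary carry pattern) visits every cell of an interval of length $2^k-1$ exactly once. So the core of the construction is: (i) a ruler-type jump sequence $\tilde{R}_k$ of length $w-1 = 2^k-2$ that, started anywhere inside an empty block of exactly $w$ cells with the block otherwise walled off, is forced to fill the whole block and can exit from either end; this is essentially the content I would isolate as the engine of the argument (and it is the same mechanism the paper will reuse for the full strip). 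Then the full fill sequence is: one step to cross from cell $2w-1$-region up to... — more precisely, do the ruler traversal of the second block $\{2w,\dots,3w-1\}$ starting from $3w-1$, exit it at cell $2w$, then a jump of length $2w$ lands on cell $0$ (the left end of the first block; it cannot land on $4w$ which is off-board or blocked depending on the embedding), then the ruler traversal of the first block $\{0,\dots,w-1\}$, exiting at $w-1$; but we still owe cells $3w$ and $3w+1$. So instead I would order it as: first step $3w \to 3w-1$ is wrong since then $3w$ is stranded. The right order is: start at $F=3w$, jump length $1$ to $3w+1$? Then $3w+1$ is an endpoint with the board edge on the right, so from there jump length $1$ back — but $3w$ is then visited going... no, $3w$ is where we came from.

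So the correct opening is forced: from $3w$ the frog must eventually use a jump that reaches cell $3w+1$, and $3w+1$ has a wall (or board edge) to its right, so the jump into $3w+1$ must be the *last* jump, coming from $3w$ (length $1$) or from some cell further left of the same parity. I would therefore build the sequence to end "$\dots \to 3w \to 3w+1$", meaning $3w$ is the second-to-last visited cell. Working backwards: visit the first block $\{0,\dots,w-1\}$ via the ruler sequence entering at one end and leaving at $0$ or $w-1$; then a single long jump of length $w+1$ from $w-1$ to $2w$, or of appropriate length to reach $3w$; then from $3w$ the final length-$1$ jump to $3w+1$. But we must also fit in the second block $\{2w,\dots,3w-1\}$. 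Concretely: open with the length-$1$ jump $3w \to 2w+?$ — hmm, $3w \to 3w-1$ enters the second block; do the ruler traversal of $\{2w,\dots,3w-1\}$ starting at $3w-1$ and ending at $2w$; jump length $2w$ from $2w$ to $0$; ruler traversal of $\{0,\dots,w-1\}$ ending at $w-1$; jump length $w-1$? That gives $2w-2$, blocked. This parity/length bookkeeping is exactly where care is needed. The clean resolution: end the first-block traversal at cell $0$, then jump length... no — I think the intended sequence ends the first-block traversal at $w-1$, then jumps to $3w$ using length $2w+1$? That overshoots the block structure.

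The honest summary of the plan, then: (1) prove the \emph{ruler lemma} — a sequence of jump lengths following the binary carry pattern, restricted to a walled-off interval of length $2^k-1$, forces the frog to fill the interval and lets it exit from a prescribed end determined solely by the entry point; (2) concatenate three pieces — the pair $\{3w,3w+1\}$, the ruler traversal of the middle block, the ruler traversal of the left block — choosing the interface jumps (the single big jumps between blocks, of lengths on the order of $2w$) and the orientation of each ruler traversal so that (a) every interface jump lands inside an empty block and not on a blocked cell, and (b) the parities and endpoints match up so the very last jump is the forced length-$1$ step onto cell $3w+1$; (3) check that within the walled-off line each big interface jump has exactly one legal direction (the other direction leaves the board or hits $B$), and that the ruler sub-sequences likewise have all choices forced. \textbf{The main obstacle} is part (2)/(3): verifying that the interface jumps are \emph{forced} — i.e. that the opposite direction is always illegal (off-board, onto a $B^w$ block, or onto an already-visited cell) — and that the endpoint parities across the three pieces chain together correctly so the construction terminates exactly on the rightmost cell with no cell stranded. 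Everything else is routine induction on $k$ using the $w = 2^{k}-1 = 2\cdot(2^{k-1}-1)+1$ recursion, which is what makes the ruler sequence fit a block of exactly this length.
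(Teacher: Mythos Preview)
You are massively overcomplicating this. The paper's proof is a single line: the jump sequence
\[
\tilde{Z} \;=\; 3w,\ \underbrace{1,1,\dots,1}_{w-1},\ w+1,\ \underbrace{1,1,\dots,1}_{w-1},\ 2
\]
does the job, and every step is forced. From cell $3w$ the jump of length $3w$ can only go left (the board has width $3w+2$), landing on cell $0$. The $w-1$ unit steps then sweep rightward through $1,2,\dots,w-1$ (leftward is off-board, then visited). From $w-1$ the jump $w+1$ can only go right to $2w$ (left is off-board). Another $w-1$ unit steps sweep $2w+1,\dots,3w-1$ (left is blocked, then visited). Finally the jump of length $2$ from $3w-1$ must go right to $3w+1$, since $3w-3$ is already visited. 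Done.

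Notice that this argument does not use the hypothesis $w=2^k-1$ at all; the ``horizontal fill'' lemma holds for any $w$. The binary / ruler machinery you are reaching for is the content of the \emph{next} lemma (the binary jump sequence), where the task is to fill a block while leaving the frog at an \emph{arbitrary} prescribed even cell --- that is where the $2^k-1$ structure is genuinely needed. Here the exit cell is fixed and sits at the end of the line, so plain unit sweeps suffice.

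Your proposal never produces a concrete sequence, and the ``ruler lemma'' you want to invoke (``started anywhere inside a walled block, forced to fill it, exits from either end'') is not actually true as stated: the binary jump sequence in the paper starts from the \emph{center} and ends at an even cell, not from an arbitrary entry point to an arbitrary endpoint. So even as an alternative route your plan has a real gap at step (1), quite apart from the unresolved endpoint bookkeeping in step (2).
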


\begin{proof}
We can use the the following horizontal jump sequence:
$$\tilde{Z} = 3w, \overbrace{1,1,\dots,1}^{w-1\mbox{ times} }, w+1,  \overbrace{1,1,\dots,1}^{w-1\mbox{ times}}  ,2$$
It forces the frog to jump to the leftmost empty cell,
fill the $w$ empty cells of the first $E^w$ block, jump to the second $E^w$ block,
fill it and finally jump on the rightmost cell.
\end{proof}

\begin{figure}[htp]
\centering
\includegraphics[width=\textwidth]{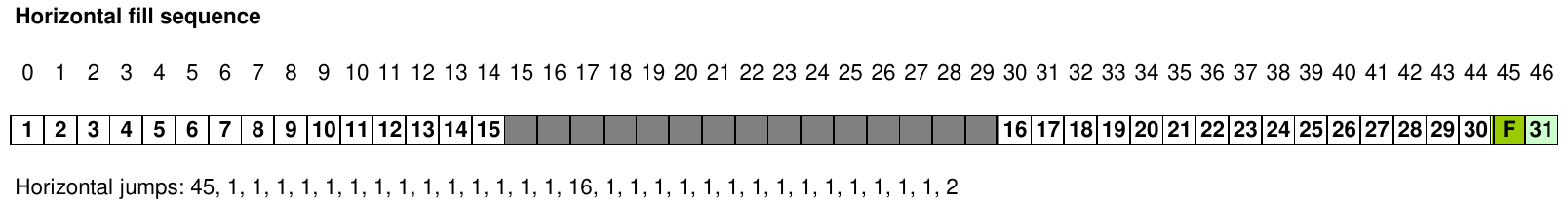}
\caption{An horizontal fill sequence on an inner strip of width
$w=15$ ($k=4$).}\label{fig:horizfill}
\end{figure}

Figure~\ref{fig:horizfill} shows an horizontal fill sequence on an inner strip of width
$w=15$ ($k=4, v=8$). We can also build an horizontal sequence
that completely fills an inner strip row that contains an already visited
cell (i.e. leaves a ``hole''). First we prove the following:

\begin{lemma}[Binary jump sequence]\label{binarylemma}
If $w = 2^k - 1, v = 2^{k-1},\, k \geq 2$, starting from the line: 
$$E^{v-1} F E^{v-1}$$ 
i.e. the frog is placed in the middle of an empty block of length $w$, we can force the frog to correctly fill the line and end
in an arbitrary even position;
(i.e. the final configuration is: $$V^{a-1}FV^{w-a}$$ (where $a$ is even) using 
a fixed binary jump sequence:
$$\tilde{U}_{k-1}, \tilde{U}_{k-2}, \dots,  \tilde{U}_{2}, \tilde{U}_{1}$$
where $\tilde{U}_j$ is the sequence of horizontal jumps: $\overbrace{1,1,\dots,1}^{2^j-1 \mbox{ times}},2^{j}+2^{j-1}-1$.
\end{lemma}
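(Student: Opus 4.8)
The plan is to analyze what the sequence $\tilde U_{k-1},\dots,\tilde U_1$ does step by step, peeling off one ``bit'' of the target position per block $\tilde U_j$. I would set up coordinates so that the empty block $E^w\,$ occupies cells $1,\dots,w$ with the frog starting at the center cell $v=2^{k-1}$. The key invariant I would try to maintain is: after processing $\tilde U_{k-1},\dots,\tilde U_{j+1}$, the frog sits in the middle of a contiguous still-empty sub-block of length $2^{j+1}-1$, with all cells outside that sub-block (within the original block) already visited. Under this invariant, the sub-block at stage $j$ has the form $E^{2^j-1} F E^{2^j-1}$ on a smaller scale, so the induction is genuinely self-similar; the base case $j=1$ is a block $E\,F\,E$ of length $3$, where a single extra move of length $2$ lands the frog on cell $1$ or cell $3$ relative to that block, i.e.\ the last bit of freedom.

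The heart of the argument is the claim that one block $\tilde U_j = \overbrace{1,\dots,1}^{2^j-1}, 2^j+2^{j-1}-1$ carries the frog from the center of an empty sub-block of length $2^{j+1}-1$, fills exactly one of its two halves of length $2^j-1$ together with the old center, and re-centers the frog in the other half (which has length $2^j-1$), all without ever leaving the board, hitting a blocked cell, or revisiting. I would verify this directly: the $2^j-1$ unit steps, taken consistently in one direction $s\in\{\pm1\}$, sweep one half of the sub-block and end at its far end; then the jump of length $2^j+2^{j-1}-1$ taken in the opposite direction $-s$ lands the frog exactly at the midpoint of the opposite half — this is the one place where the precise value $2^j+2^{j-1}-1$ matters, and I would check the arithmetic that the distance from the far end of one half back to the center of the other half is indeed $(2^j-1) + 2^{j-1} = 2^j+2^{j-1}-1$. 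Since the two halves plus the old center partition the length-$(2^{j+1}-1)$ block, and the unit steps only touch cells inside that block, no collision with previously visited cells (all of which lie outside the block by the invariant) can occur, and the board is large enough that no boundary issue arises as long as the outermost block $\tilde U_{k-1}$ starts from the global center — which is exactly the hypothesis $w=2^k-1$, $v=2^{k-1}$.

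Having established the per-block claim, the overall statement follows by induction on $k$: after all of $\tilde U_{k-1},\dots,\tilde U_2$ the frog is centered in an empty sub-block of length $3$ with every other cell of the original block visited, and the final $\tilde U_1$ (which is just the single jump ``$2$'', since $2^1-1=0$ unit steps) moves it to one of the two endpoints of that triple, giving final configuration $V^{a-1} F V^{w-a}$; and $a$ is even because the final position is congruent mod $2$ to $v=2^{k-1}$ shifted by the sum of the displacements, and one checks each $\tilde U_j$ changes the parity class of the position in the predictable way so that exactly the even cells are reachable. I expect the main obstacle to be bookkeeping the sign choices cleanly: one must argue both that \emph{every} even target $a$ is reachable (choose the signs $s$ at stage $j$ according to the $j$-th binary digit of $a$) and that \emph{no} move in any valid run of this sequence can ever fail — revisits, blocked cells, or falling off the board — which is where the self-similar invariant does all the work. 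A secondary nuisance is the off-by-one arithmetic in the length $2^j+2^{j-1}-1$ and in identifying the midpoint of a block of even parity-offset, so I would prove the per-block claim as a separate displayed computation rather than inline.
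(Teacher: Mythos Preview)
Your approach is essentially the same as the paper's: both argue by induction (you phrase it as a descending invariant over $j$, the paper as induction on $k$), and the core computation is identical---$\tilde U_j$ applied to a centered sub-block of length $2^{j+1}-1$ sweeps one half with the $2^j-1$ unit steps and then the jump of $2^j+2^{j-1}-1$ lands at the center of the other half.

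One slip to fix: you write that $\tilde U_1$ is ``just the single jump $2$, since $2^1-1=0$ unit steps,'' but $2^1-1=1$, so $\tilde U_1=1,2$. Correspondingly, from the center of $EFE$ a lone jump of length $2$ would leave the block; what actually happens is the unit step moves the frog to one end and then the jump of $2$ carries it to the other end. This is harmless for your argument, since your per-block claim already handles $j=1$ uniformly---you don't need a separate base case at all.
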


\begin{proof}
We can prove it by induction on $k$: when $k = 2$ the line is $EFE$, and
the binary jump sequence is $\tilde{U}_{1} = 1,2$; so the frog can make the first
jump left or right and then reach one of
the two possible final even cells ($VVF$ or $FVV$).
Now, suppose that the statement holds for $k$; and the frog is placed in
the middle of a $w' = 2^{k+1}-1$ empty block ($E^{2^k} F E^{2^k}$).
The first jump sequence is\\ $\tilde{U}_{k} = \overbrace{1,1,\dots,1}^{2^k-1 \mbox{ times}},2^{k}+2^{k-1}-1$; suppose that the target even cell
is on the right half of the line, the frog can make $2^k-1$
left horizontal jumps and completely fill the left half of the line
($FV^{2^k}E^{2^{k}-1}$); then the final (right) jump $2^{k}+2^{k-1}-1$ can 
lead it in the middle of the right half of the line that contains the target even cell ($V^{2^k} E ^{2^{k}-1} F E ^{2^{k}-1}$); and by induction hypothesis
it can fill the right half and end in the target even cell.
If the target even cell was on the left half of the line,
the frog can simply invert the jump directions.
Informally the line can be seen as a binary tree with
the leaves corresponding to even cells; the sequence of jumps
allows the frog to choose a half of the tree, fill it, jump to the other half,
fill it and so on until the tree is completely visited.
\end{proof}

\begin{figure}[htp]
\centering
\includegraphics[width=\textwidth]{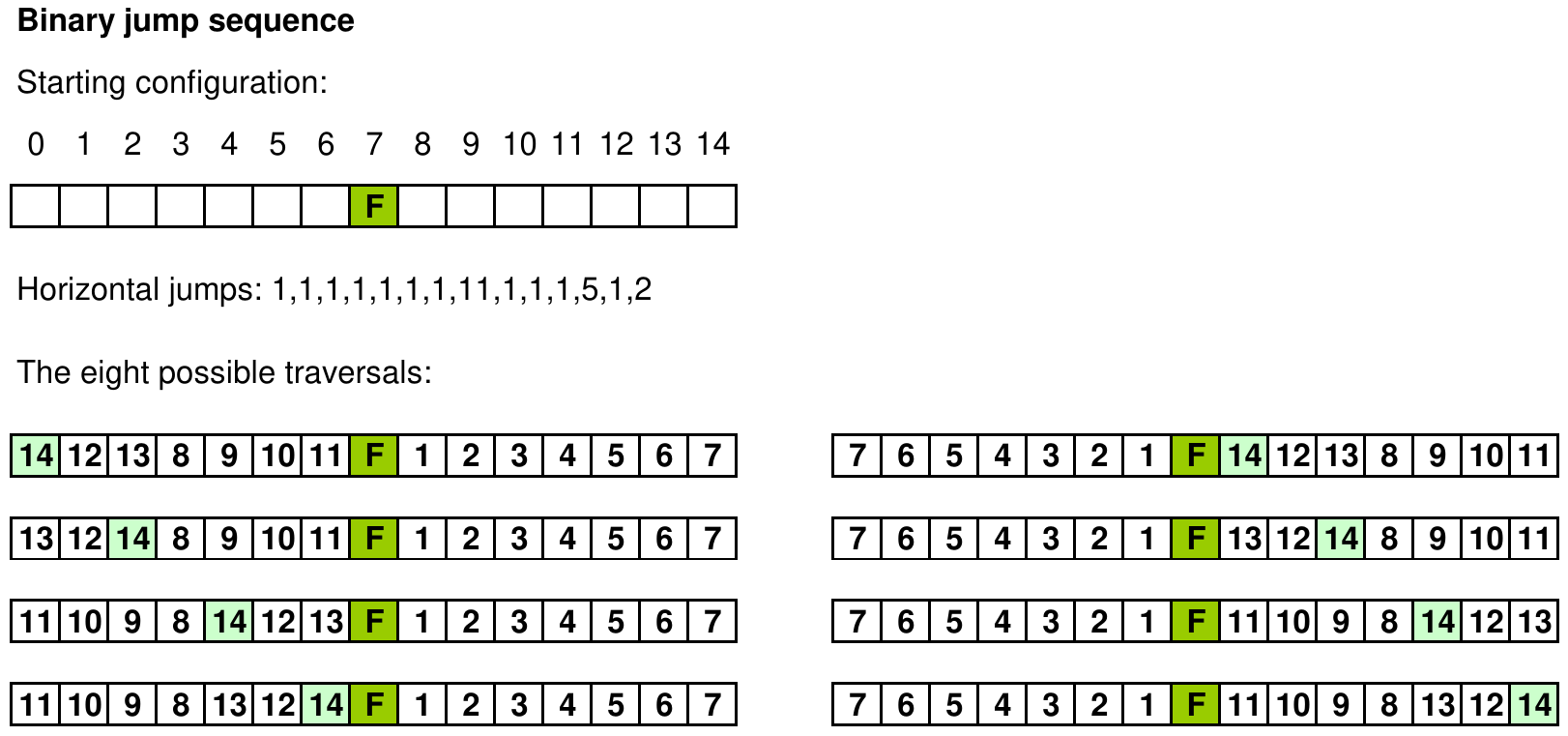}
\caption{A binary jump sequence on a line of width
$w=15$ ($k=4$) and the jumps that place the frog on cells 0,2,4,6,8,10,12,14.}\label{fig:binseq}
\end{figure}

Figure~\ref{fig:binseq} shows some examples of a binary jump sequence on a line of width
$w=15$ ($k=4$).
If we reverse all the sequences of jumps we can go from the final configuration of
Lemma~\ref{binarylemma} to the initial configuration, so the following also holds:

\begin{lemma}[Reverse binary jump sequence]\label{binarylemmarev}
If $w = 2^k - 1, v = 2^{k-1},\, k \geq 2$ and the frog is placed on an arbitrary even cell of an empty
line of length $w$ ($E^{a-1}FE^{w-a}$), then the following fixed sequence of jumps
that allows the frog to completely fill the line and end in the center cell
($V^{v-1} F V^{v-1}$):
$$\tilde{U}_{1}^R, \tilde{U}_{2}^R, \dots, \tilde{U}_{k-2}^R, \tilde{U}_{k-1}^R$$
where $\tilde{U}_j^R$ is the reverse of the sequence $\tilde{U}_j$ of horizontal jumps:\\
$\tilde{U}_j^R = 2^{j}+2^{j-1}-1,\overbrace{1,1,\dots,1}^{2^j-1 \mbox{ times}}$
\end{lemma}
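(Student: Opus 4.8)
The plan is to obtain this lemma essentially for free from Lemma~\ref{binarylemma} by a time--reversal argument, so that no new induction is needed. The key observation is that the Crazy Frog jump model on a line is reversible. A traversal of an all--empty line of length $w$ is nothing but a sequence of pairwise distinct cells $c_0,c_1,\dotsc,c_m$ that together exhaust the line, and such a traversal is realized by a horizontal jump sequence $\Delta_1,\dotsc,\Delta_m$ exactly when $|c_i-c_{i-1}| = \Delta_i$ for every $i$ (recall that writing a jump as the single number $\Delta_i$ leaves the frog free to move $+\Delta_i$ or $-\Delta_i$). Reading such a traversal backwards gives $c_m,c_{m-1},\dotsc,c_0$, again a list of pairwise distinct cells exhausting the line, and it is realized by the reversed jump sequence $\Delta_m,\Delta_{m-1},\dotsc,\Delta_1$, since $|c_{i-1}-c_i| = |c_i-c_{i-1}|$. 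Moreover the starting configuration of the reversed traversal is the final configuration of the original one with every $V$ replaced by $E$, and conversely.

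Next I would apply this to Lemma~\ref{binarylemma}. Fix an even cell $a$ of a length--$w$ line. By Lemma~\ref{binarylemma} there is a valid traversal that starts at the center cell of the all--empty line, visits every cell exactly once, ends at $a$, and is realized by the concatenation $\tilde{U}_{k-1},\tilde{U}_{k-2},\dotsc,\tilde{U}_1$. Reversing this traversal produces a valid traversal that starts at $a$ (all cells still empty, i.e. the configuration $E^{a-1}FE^{w-a}$), visits every cell exactly once, and ends at the center cell, i.e. in the configuration $V^{v-1}FV^{v-1}$; by the reversibility observation it is realized by the reversal of $\tilde{U}_{k-1},\tilde{U}_{k-2},\dotsc,\tilde{U}_1$.

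It then only remains to check that this reversed jump sequence is the one in the statement. Reversing a concatenation reverses both the order of the blocks and the order within each block, so the reversal of $\tilde{U}_{k-1},\dotsc,\tilde{U}_1$ is $\tilde{U}_1^R,\tilde{U}_2^R,\dotsc,\tilde{U}_{k-1}^R$, where $\tilde{U}_j^R$ is $\tilde{U}_j = 1,1,\dotsc,1,\,2^j+2^{j-1}-1$ ($2^j-1$ ones) read backwards, i.e. $\tilde{U}_j^R = 2^j+2^{j-1}-1,\,1,1,\dotsc,1$, exactly as claimed. Crucially this sequence is independent of $a$, for the same reason that the forward sequence in Lemma~\ref{binarylemma} is: there a single fixed sequence reaches every even target cell, hence here a single fixed reverse sequence works from every even starting cell.

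I do not expect any genuinely hard step. The only points needing a little care are (i) ensuring that the single--number abbreviation for a jump really does encode both directions, so that the reverse of a realized traversal is again realized, and (ii) the purely clerical verification that reversing the concatenation $\tilde{U}_{k-1},\dotsc,\tilde{U}_1$ yields the blocks $\tilde{U}_j^R$ in the stated order $j=1,\dotsc,k-1$. (One could instead reprove the lemma directly by the same binary--tree induction used for Lemma~\ref{binarylemma}, running the tree from leaf to root rather than root to leaf, but this is strictly more work.)
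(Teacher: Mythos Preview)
Your proposal is correct and follows exactly the paper's approach: the paper simply observes, in one sentence, that reversing all jump sequences takes the final configuration of Lemma~\ref{binarylemma} back to its initial one. Your write-up is a more detailed unpacking of that same time-reversal argument, including the clerical check that reversing the concatenation yields the stated blocks $\tilde{U}_j^R$.
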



Combining the sequences used in the previous two lemmas we can build the horizontal
hole sequence:

\begin{lemma}[Horizontal hole sequence]\label{lem:hhole}
If $w = 2^k - 1, v = 2^{k-1}, k\geq 2$, starting from the line: 
$$E^w B^w E^w F E$$ we can force the frog to correctly visit and fill the line except one empty (or visited) cell in an arbitrary even position of the leftmost empty block and finally jump on the rightmost cell;
i.e. the final configuration is: $$(V^{a-1} E V^{w-a}) B^w V^w V F$$ where $a$ is even.
\end{lemma}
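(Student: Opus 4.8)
The plan is to reduce the horizontal hole sequence to the two building blocks already established: the horizontal fill sequence (from the earlier lemma) and the reverse binary jump sequence of Lemma~\ref{binarylemmarev}, with the binary jump sequence of Lemma~\ref{binarylemma} used to re-center the frog. Starting from $E^w B^w E^w F E$, I would first let the frog jump left onto the rightmost empty block and traverse it toward its center. The natural way to do this is to begin with a jump of length $2$ (onto the rightmost cell of the $E^w$ block, since the frog starts on $F$ one cell to its right with an empty cell between), then apply the reverse binary sequence $\tilde U_1^R,\dots,\tilde U_{k-1}^R$ to fill the entire right $E^w$ block and land exactly in its center cell. At that point the right block is fully visited and the configuration is $E^w B^w V^{v-1} F V^{v-1} V F$-type, i.e. the frog sits in the middle of the already-filled right block.

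Next I would jump the frog across the $B^w$ barrier into the \emph{left} $E^w$ block, again landing in its center; a single horizontal jump of the appropriate length (roughly $2w$, carrying the frog from the center of the right block over the width-$w$ blocked strip into the center of the left block) accomplishes this. Now the frog is in the configuration $E^{v-1} F E^{v-1}$ on the left block, which is precisely the starting configuration of Lemma~\ref{binarylemma} (the binary jump sequence). Applying $\tilde U_{k-1},\tilde U_{k-2},\dots,\tilde U_1$ lets the frog fill the left block and end in \emph{any} prescribed even position $a$ — except that to leave a hole at position $a$ we instead want the frog to end at position $a$ having visited every cell of the block \emph{other} than $a$. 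The trick is to designate the hole cell as the ``target even cell'' of the lemma but stop the traversal one step before stepping onto it: since in the proof of Lemma~\ref{binarylemma} the final $\tilde U_1 = 1,2$ step is what actually lands the frog on the chosen even cell after it has been positioned adjacent, we can instead have the frog fill everything and end one cell away, leaving $a$ empty. This requires a careful re-examination of the recursion, and this is the step I expect to be the main obstacle: arranging the binary recursion so that it fills $w-1$ cells and skips exactly the one even cell $a$, while still ending at a controlled position from which the final hop is possible.

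After the left block is filled with a hole at even position $a$, the remaining task is to route the frog back to the rightmost cell. Since the right $E^w$ block and the two trailing cells $VF$ (in the target configuration $(V^{a-1}EV^{w-a})B^wV^wVF$) must end up visited, I would order things so that the \emph{right} block and trailing cells are traversed \emph{after} the left block rather than before — that is, swap the roles: first do the left block (ending in its center via the reverse binary sequence started from the entrance), jump to the right block, fill it leaving no hole using the horizontal-fill-style sequence but ending on the rightmost cell. Concretely the final jump sequence $\tilde U$ for this lemma will be the concatenation: an entry jump into the left block, $\tilde U_1^R,\dots,\tilde U_{k-1}^R$ to fill the left block and center the frog (or a truncated variant leaving the hole), a crossing jump of length $\approx 2w$ over the $B^w$ strip into the right block's center, then $\tilde U_{k-1},\dots,\tilde U_1$ composed with the fill-sequence tail ($w+1$ then $w-1$ ones then $2$, as in the horizontal fill lemma) to sweep the right block and land on the final rightmost cell.

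I would then verify three things: (i) every jump stays within the line and never lands on a blocked cell of the $B^w$ barrier — this follows because all intra-block jumps have length $<w$ and the single crossing jump has length exactly chosen to clear the barrier; (ii) no cell is visited twice — this follows from the disjointness of the left-block phase and right-block phase together with the injectivity already proved in Lemmas~\ref{binarylemma} and~\ref{binarylemmarev}; (iii) the hole lands at an arbitrary even position, which is exactly the freedom granted by Lemma~\ref{binarylemma}. The parity bookkeeping ($w$ odd, $v=2^{k-1}$ even, hole at even coordinate) is the one place where an off-by-one slip is easy, so I would track coordinates explicitly modulo $2$ throughout, but I do not expect any genuinely new idea beyond gluing the established sequences.
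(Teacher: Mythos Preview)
Your proposal identifies the right building blocks (the forward and reverse binary sequences, a crossing jump of length $\approx 2w$, and a truncation to leave the hole), but the way you assemble them does not work, and the paper's proof hinges precisely on the assembly order you have backwards.

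In your revised plan you apply the \emph{reverse} binary sequence to the left block and the \emph{forward} binary sequence to the right block. Both choices are fatal. The reverse sequence $\tilde U_1^R,\dots,\tilde U_{k-1}^R$ starts from an arbitrary even cell and fills the block completely, ending at the center; there is no ``truncated variant'' of it that leaves a single arbitrary even hole---truncating it before $\tilde U_{k-1}^R$ leaves half the block empty, not one cell. Conversely, the forward sequence on the right block ends at an \emph{arbitrary} even cell, so the final hop to the rightmost cell would need a jump whose length depends on that cell, contradicting the requirement that the jump sequence be fixed. Your first (un-swapped) plan has the mirror-image problem: after filling the right block first and then leaving a hole in the left block, the frog sits at an unknown even position in the left block with every cell to its right already visited or blocked, so no fixed jump can carry it to the rightmost cell.

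The paper's proof resolves this by doing the opposite of your revised plan: jump $2w+v$ to the \emph{center} of the left block, run the \emph{forward} sequence $\tilde U_{k-1},\dots,\tilde U_2,1$ truncated just before the final length-$2$ jump of $\tilde U_1$ (this is exactly the ``stop one step early'' idea you mention, but applied to the forward sequence), leaving the hole at the chosen even cell $a$ with the frog at $a\pm 2$; then jump $2w$ into the right block, landing at an unknown even cell; now the \emph{reverse} sequence $\tilde U_1^R,\dots,\tilde U_{k-1}^R$ absorbs that positional uncertainty and deposits the frog at the fixed center of the right block, from which a fixed jump $v+1$ reaches the rightmost cell. The structural point you are missing is that the reverse binary sequence is the device that converts an unknown even position into a known one, so it must come \emph{after} the hole-creating truncated forward sequence, not before.
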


\begin{proof}
We use the following sequence of horizontal jumps:

$$\tilde{H} = 2w+v, \tilde{U}_{k-1}, \tilde{U}_{k-2}, \dots , \tilde{U}_2, 1,2w, \tilde{U}_{1}^R, \tilde{U}_2^R, \tilde{U}_3^R, \dots, \tilde{U}_{k-1}^R, v+1$$

Where $\tilde{U}_j$ is the sequence of horizontal jumps: $\overbrace{1,1,\dots,1}^{2^j-1 \mbox{ times}},2^{j}+2^{j-1}-1$; and
$\tilde{U}_j^R$ is the reverse of $\tilde{U}_j$.

The first jump ($2w+v$) forces the frog in the middle of the first $E^w$ block, then
it can follow the fixed sequence of jumps of Lemma~\ref{binarylemma} except for the
very last jump; note that the
sequence of Lemma~\ref{binarylemma} ends with $\dots,\tilde{U}_2,\tilde{U}_1$
while the sequence here is $\dots,\tilde{U}_2,1,\dots$, i.e. it misses the final
horizontal jump of $\tilde{U}_1$ that has length $2$ ($\Delta x = 2$).
In this way we can reach a configuration in which an arbitrary even cell $(2x,\cdot)$ is
left empty, and the frog is positioned at distance $2$ from that cell, i.e. it is
on the even cell $(2x+2,\cdot)$ or $(2x-2,\cdot)$.
So it can jump on the second $E^w$ block 
with the horizontal jump $2w$ and land on the even cell $(2x+2+2w,\cdot)$ or
$(2x-2+2w,\cdot)$;
and then it can completely fill the block reaching its center cell by using
the fixed sequence of jumps of Lemma~\ref{binarylemmarev}.
Finally it can make a final $v+1$ jump to reach the rightmost empty cell.
\end{proof}

\begin{figure}[htp]
\centering
\includegraphics[width=12.5cm]{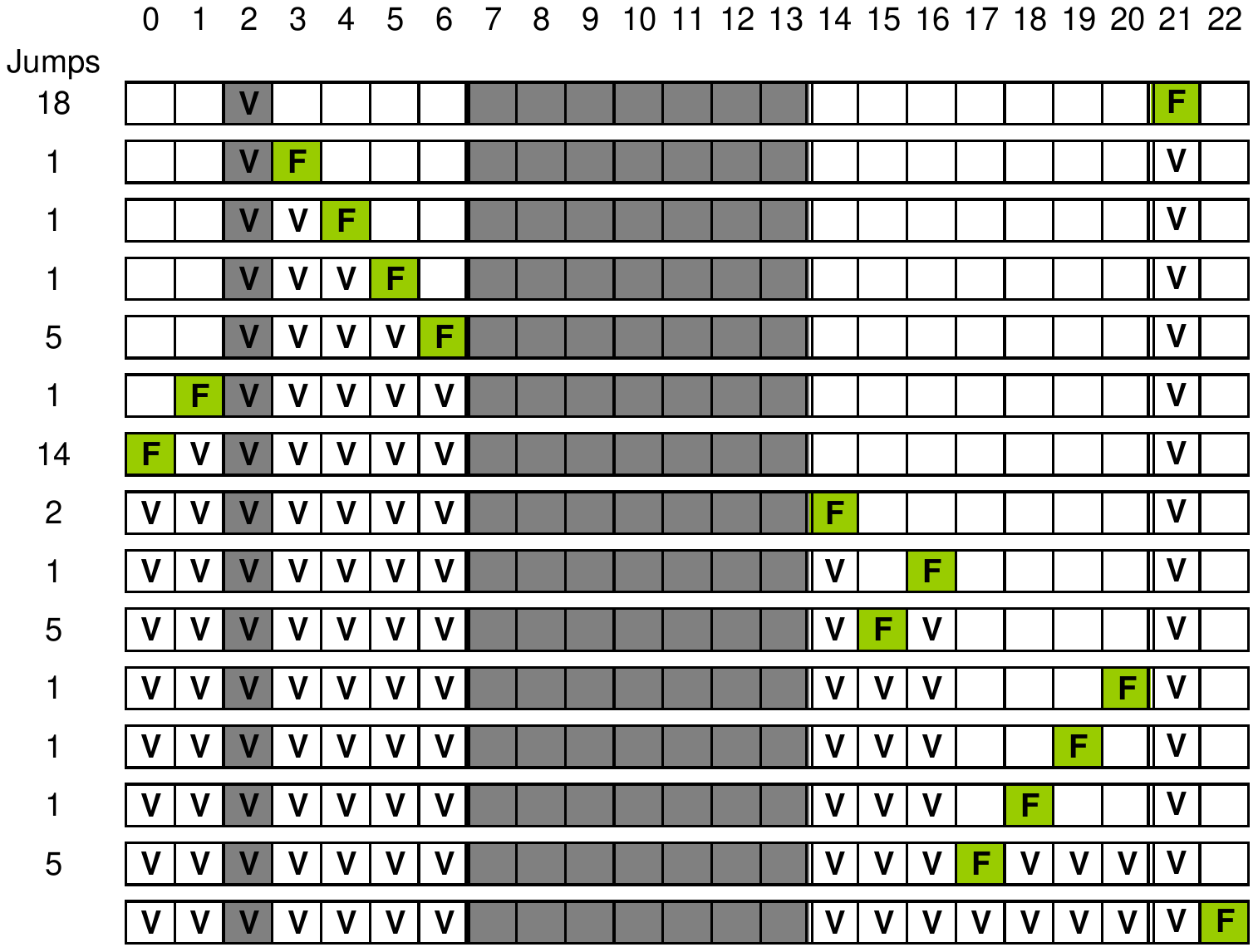}
\caption{A possible horizontal hole sequence traversal that
fills an inner strip row with an already visited cell.}\label{fig:hhole}
\end{figure}

Figure~\ref{fig:hhole} show a possible traversal of the line $E^{7}B^{7}E^{7}FE, (k=3, w=7, v=4)$ that leaves a hole (corresponding
to an already visited cell) in even position
using the sequence of jumps defined in Lemma~\ref{lem:hhole}.

\begin{figure}[htp]
\centering
\includegraphics[width=\textwidth]{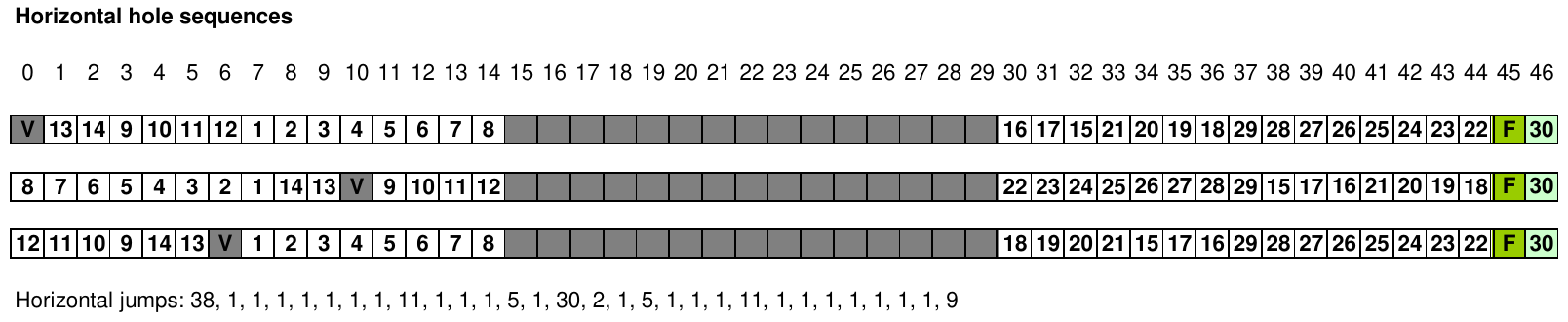}
\caption{Three horizontal hole sequences
with the already visited cell in different positions 
and a horizontal fill sequence on an inner strip of width
$w=15$ ($k=4, v=8$).}\label{fig:hhole2}
\end{figure}

Figure~\ref{fig:hhole2} shows three horizontal hole sequences
with the already visited cell in different positions on an inner strip of width
$w=15$ ($k=4, v=8$).

Note that in both types of horizontal sequences (fill and hole), we can
extend the first and last steps so that the frog can be
positioned farther from the inner strips.


\subsection{Vertical selector sequence}

Using a similar sequence of the horizontal hole sequence,
we can build a sequence of jumps that allows the frog, placed on
the top of a $2v \times w$ empty area to 
vertically select a different even row, $v$ times and finally exit the area on 
the cell at the bottom-left. 

\begin{lemma}[Vertical selector sequence]
If $w = 2^k - 1, v = 2^{k-1}, k \geq 2$, starting from the $2v \times w+2$ configuration:

\begin{center}
\begin{tabular}{r c }
row \# & cells\\\hline
$l_i +2w -1:$ & $FB^{2v-2}B$  \\\hline
\specialcell{$l_i+2w$:\\~\\~} & \specialcell{$E^{2v}$\\\dots\\$E^{2v}$}\\\hline
$l_i +3w:$ & $EB^{2v-2}E$  
\end{tabular}
\end{center}

we can allow the frog to choose a different even row $v$ times, and reach the configuration:

\begin{center}
\begin{tabular}{r c }
row \# & cells\\\hline
$l_i +2w -1:$ & $VB^{2v-2}B$  \\\hline
\specialcell{$l_i+2w$:\\~\\~} & \specialcell{$V^{2v}$\\\dots\\$V^{2v}$}\\\hline
$l_i +3w:$ & $FB^{2v-2}V$  
\end{tabular}
\end{center}

\end{lemma}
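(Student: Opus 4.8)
The plan is to mimic the structure of the horizontal hole sequence (Lemma~\ref{lem:hhole}), transposed into the vertical direction, but now chained $v$ times so that the frog can visit $v$ distinct even rows, one on each pass, and then exit. First I would observe that the $2v \times w$ empty area has $v$ even rows and $v$ odd rows, but the odd rows of the inner strip are blocked; what makes this region $2v$ columns wide is that the strip cleanup gadget sits to the right of the inner strip, and the combined width is $2v$ (the even cells of the strip plus the cleanup columns). So the frog enters at the top-left corner on row $l_i+2w-1$, and at each ``selector pass'' it does the following: a long vertical jump $(0,\,?)$ carries it into the middle of column-run, a binary-style sequence of small vertical jumps (using the $\tilde U_j$ building blocks of Lemma~\ref{binarylemma}, now with $dx=0$) lets it pick an arbitrary even row while filling all the cells of that column it passes through, and then a horizontal escape jump moves it one column to the right so the just-filled column is not revisited. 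Repeating this $v$ times fills all $2v$ columns (the $v$ originally containing the chosen rows, interleaved with the $v$ cleanup columns) and positions the frog at the bottom-left cell $l_i+3w$, as required.

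The key steps, in order, would be: (1) describe the fixed vertical jump sequence explicitly as a concatenation $\tilde V = \sigma_1 \sigma_2 \cdots \sigma_v$ where each $\sigma_j$ has the form ``enter-jump, $\tilde U_{k-1},\ldots,\tilde U_2, 1$ (a truncated binary descent), exit-jump'' analogous to $\tilde H$ in Lemma~\ref{lem:hhole}, with $dx,dy$ roles swapped and a trailing $(1,0)$ horizontal shift between consecutive passes; (2) verify by induction on $j$ that after pass $\sigma_j$ exactly $j$ of the $2v$ columns are completely filled and the frog sits at a controlled corner; (3) check the boundary conditions — that the first $F B^{2v-2} B$ row forces the entry, that the blocked cells $B^{2v-2}$ on the top and bottom rows prevent the frog from escaping vertically mid-sequence or landing on the wrong row, and that the final jump of the last pass lands exactly on the bottom-left cell $l_i+3w$ giving $F B^{2v-2} V$; (4) note, as in Lemma~\ref{lem:hhole}, that the first and last jumps of each pass can be lengthened so the gadget composes correctly with its neighbours and with the horizontal hole/fill sequences that get embedded (the statement of the enclosing lemma in the excerpt says the horizontal hole and fill sequences are ``embedded in the vertical selector sequences'', so I would flag that the embedding point is precisely the moment the frog is parked on a selected even row, where a horizontal $\tilde H$ or $\tilde Z$ block can be spliced in before the vertical descent resumes).

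I would rely throughout on Lemma~\ref{binarylemma} and Lemma~\ref{binarylemmarev} essentially verbatim, only with the coordinate axes exchanged: a ``line of width $w = 2^k-1$'' there becomes a ``column of height $w$'' here, and the binary-tree-of-even-cells argument goes through unchanged because it is purely one-dimensional. The main obstacle will be step (3): the bookkeeping of exactly which columns are filled after each pass and proving that the frog is never forced onto an already-visited cell or blocked cell at the seams between passes — in particular, that the single horizontal shift $(1,0)$ between passes does not collide with the cleanup columns in a bad order, and that the top row $F B^{2v-2} B$ and bottom row $E B^{2v-2} E$ really do pin down the entry and exit columns. This is the same kind of careful case-checking that Lemma~\ref{lem:hhole} required, just iterated $v$ times, so I expect it to be routine but lengthy; a clean way to organize it is the induction in step (2), carrying as invariant both the set of filled columns and the exact frog position after $\sigma_j$.
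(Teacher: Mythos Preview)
Your overall architecture is right and matches the paper: $v$ passes, each consuming two of the $2v$ columns, using vertical versions of the $\tilde U_j$ building blocks, with the horizontal $\tilde H$ or $\tilde Z$ spliced in at the moment the frog sits on the selected even row. Where you go wrong is in step~(1), modelling each pass $\sigma_j$ on the \emph{hole} sequence $\tilde H$ with a \emph{truncated} descent $\tilde U_{k-1},\ldots,\tilde U_2,1$. Truncation is precisely what makes $\tilde H$ leave one cell unvisited; if you do this vertically in every pass you leave $v$ unvisited cells in the selector area, and the final configuration cannot be $V^{2v}$ on every row. The selection mechanism you want is not ``leave a hole at an even row'' but ``\emph{end} on an even row'', and that is exactly what the \emph{full} binary sequence of Lemma~\ref{binarylemma} already provides.

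Concretely, the paper's pass is $\tilde V'_i,\overline{(1,0)},\tilde V''_i,(1,0)$ with $\tilde V'_i=\tilde U_{k-1},\ldots,\tilde U_2,\tilde U_1$ (the full sequence, not truncated) and $\tilde V''_i=\tilde U_1^R,\ldots,\tilde U_{k-1}^R$. So: an initial $(0,v)$ drops the frog to the centre of the first column; $\tilde V'_i$ fills that entire column and terminates on an arbitrary even row (Lemma~\ref{binarylemma}, transposed); the $\overline{(1,0)}$ step --- where the horizontal $\tilde H$ or $\tilde Z$ is later embedded --- moves one column right, still on that even row; $\tilde V''_i$ fills the new column and returns to the centre row (Lemma~\ref{binarylemmarev}, transposed); a plain $(1,0)$ advances to the next pair of columns. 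After $v$ such pairs a final $(0,v),(2v-1,0)$ reaches the bottom-left exit cell. Replace your truncated $\sigma_j$ with this full forward/reverse pair and your induction in step~(2) and bookkeeping in step~(3) go through exactly as you outlined.
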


\begin{proof}
The sequence
is:
$$\tilde{V} = (0,v), \tilde{V'}_1, \overline{(1,0)}, \tilde{V}''_1, (1,0), \dots, (1,0), \tilde{V'}_v, \overline{(1,0)}, \tilde{V}''_v,(0,v),(2v-1,0)$$
Where: 
$$\tilde{V'}_i =  \tilde{U}_{k-1}, \tilde{U}_{k-2}, \dots , \tilde{U}_2, \tilde{U}_1$$
$$\tilde{V''}_i =  \tilde{U}_1^R,\tilde{U}_2^R,\tilde{U}_3^R,\dots,\tilde{U}_{k-1}^R$$
$\tilde{U}_j$ is the sequence of vertical jumps: $\overbrace{1,1,\dots,1}^{2^j-1 \mbox{ times}},2^{j}+2^{j-1}-1$; and
$\tilde{U}_j^R$ is the reverse of $\tilde{U}_j$.
In other words we are using the sequence of jumps of Lemma~\ref{binarylemma}
and Lemma~\ref{binarylemmarev}, but here the jumps are vertical.
The (vertical) sequence $\tilde{U}_j$ selects a cell on an even row, then, after
an horizontal right jump $(1,0)$, the reverse $\tilde{U}_j^R$ allows the frog to
reach the center row again.
\end{proof}

Figure~\ref{fig:vertsel} shows the possible traversal of a $7 \times 8$
area using the vertical selector sequence.

\begin{figure}[H]
\centering
\includegraphics[width=\textwidth]{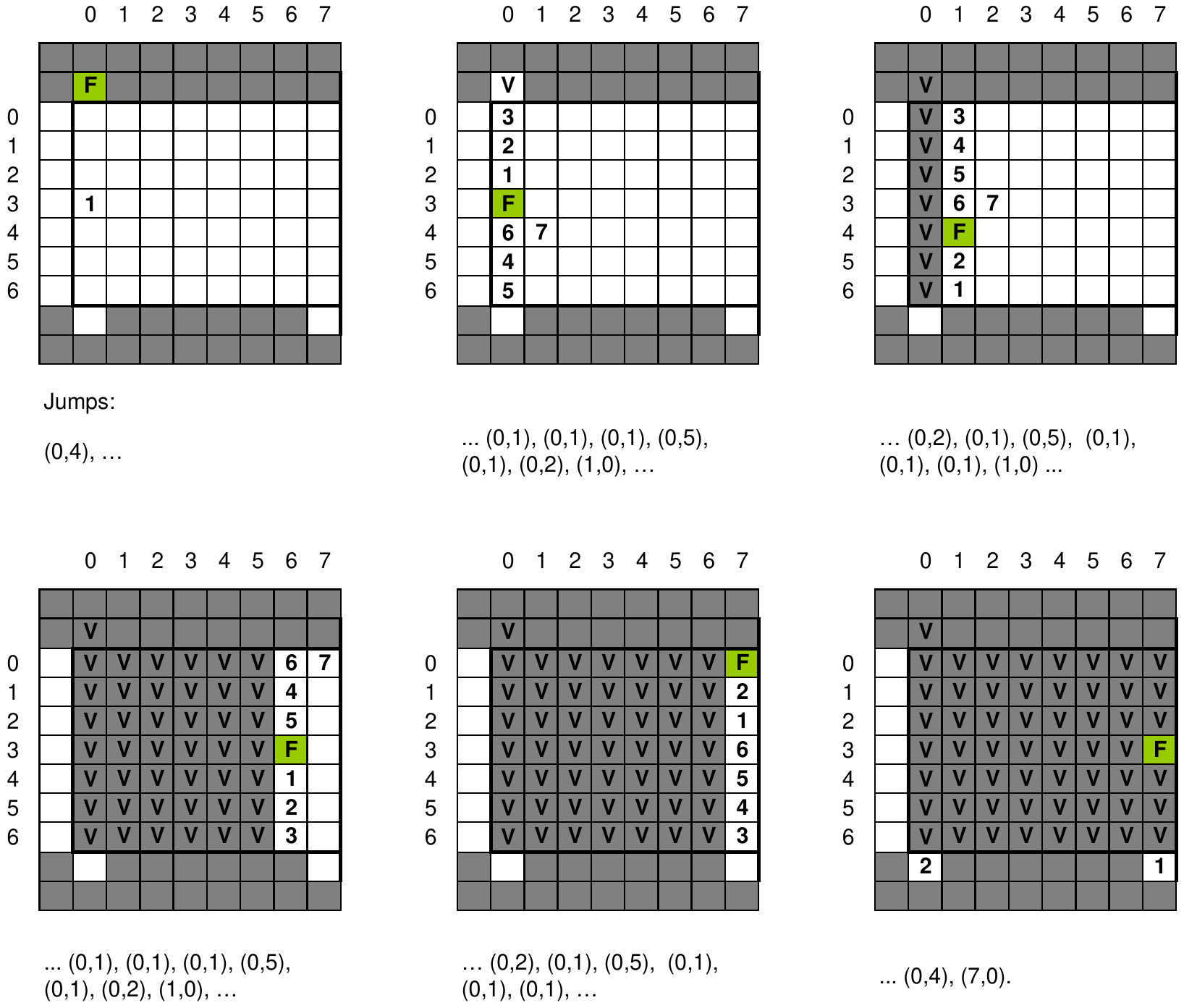}
\caption{An example of the vertical selector sequence
on a $7 \times 8$ area ($w=7, v=4$).}\label{fig:vertsel}
\end{figure}

\subsection{Linking the gadgets}

The horizontal inner jumps $\overline{(1,0)}$ in the vertical selector sequences $\tilde{V}$
can be replaced by an horizontal hole sequence or by an horizontal fill 
sequence (extending their first and last jump).
So we can build a sequence of jumps that allows the frog to:
\emph{a)} select 2 even rows of the inner strip with a visited cell and fill them;
\emph{b)} select the remaining $v-2$ empty even rows and fill them.

The complete inner strip sequence $S_i$ is:

$$\begin{array}{c l}
\tilde{S}_i = & (0,v), \tilde{V'}_1, \tilde{H}, \tilde{V}''_1, (1,0), \tilde{V'}_2, \tilde{H}, \tilde{V}''_2, (1,0),\\
~ & \tilde{V'}_3, \tilde{Z}, \tilde{V}''_3, (1,0),\dots,
 \tilde{V'}_{v}, \tilde{Z}, \tilde{V}''_v, (0,v),(2v-1,0)
\end{array}$$

Figure~\ref{fig:final} shows a possible traversal of the strip
cleanup gadget of Figure~\ref{fig:cleanup2}.
\begin{figure}[htp]
\centering
\includegraphics[width=\textwidth]{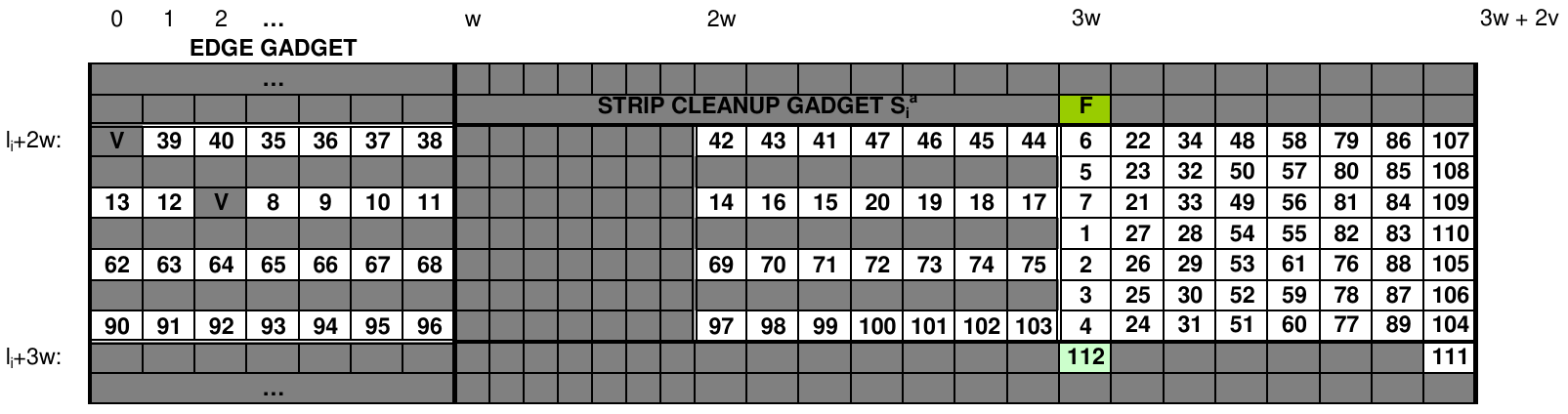}
\caption{A possible traversal of the strip
cleanup gadget of Figure~\ref{fig:cleanup2},
that combines the vertical selector sequence with the horizontal sequences.}\label{fig:final}
\end{figure}

Finally we can link together the cleanup sequences $\tilde{C}_i$;
first we add an empty cell at coordinates $(3w,y_t)$ that
allows the frog to jump from the target cell $(x_t,y_t)$ to 
the entrance cell
of the top inner strip gadget $S_i^a$; then we add
a vertical jump from gadget $C_i$ to gadget $C_{i+1}$, $i = 1,\dots,n-2$:

$$\begin{array}{c l}
\tilde{C} = & (3w-x_t,0),(0,l_1+2w-1-y_t),\tilde{C}_1,(0,l_2+2w-1-l_1+5w), \dots\\
~ & \dots,(0,l_{n-1}+2w-1-l_{n-2}+5w), \tilde{C}_{n-1}
\end{array}$$

\begin{theorem}
The {\sc Crazy Frog Puzzle} is $\sf{NP}$--complete.
\end{theorem}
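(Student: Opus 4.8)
The plan is to establish the two halves of $\sf{NP}$--completeness. Membership in $\sf{NP}$ is immediate: a certificate is the direction vector $(s_1,\dots,s_m)\in\{-1,+1\}^m$, and it is checked by simulating the frog from $c_0$ --- computing $(x_i,y_i)=(x_{i-1}+s_i\,dx_i,\,y_{i-1}+s_i\,dy_i)$ and rejecting if a landing cell is off the board, blocked, or already marked, otherwise marking it --- and finally verifying that every empty cell was marked; with a visited-set this runs in polynomial time. For $\sf{NP}$--hardness I would invoke the reduction constructed above from the Hamiltonian path problem on grid graphs (possibly with holes), which is $\sf{NP}$--complete by Itai, Papadimitriou and Szemer\'edi~\cite{ips82}: on input $(G,s,t)$ it outputs the board $R$ of side $w=2^k-1=O(m)$, the graph area, the $n-1$ edge gadgets $L_i$ with their cleanup gadgets $C_i$ (each a pair of strip cleanup gadgets $S_i^a,S_i^b$), and the fixed jump sequence $\tilde{L}_1,\dots,\tilde{L}_{n-1},\tilde{T},\tilde{C}$. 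The board has $O(n w^2)=O(m^4)$ cells, the jump sequence has length equal to the number of empty cells, and everything is plainly computable in polynomial time, so it suffices to show that $G$ has a Hamiltonian $s$--$t$ path if and only if the produced CFP instance has a solution.

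For the ``if there is a Hamiltonian path'' direction, write it as $s=v_{j_0},v_{j_1},\dots,v_{j_{n-1}}=t$ and drive the frog along it. In phase $\tilde{L}_i$ the opening and closing vertical jumps $(0,J_{L_i})$, $(0,J'_{L_i})$ are forced --- an upward choice leaves the board, and a downward choice out of the bottom inner strip lands on a blocked cell or off the board, by the parity computation using that each $l_i=o_i\cdot 7w$ is odd --- so the frog must enter $L_i$, perform $\tilde{L}_{seq}$ inside it, and come back to the graph area on one of the four cells adjacent to its current node; we choose the one equal to $v_{j_i}$, which is legal since $v_{j_i}$ is adjacent to $v_{j_{i-1}}$ and still unvisited. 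After $n-1$ phases the frog is on node $t$, the unique odd step $\tilde{T}=(1,0)$ moves it to the target cell, and then, by the horizontal hole lemma (Lemma~\ref{lem:hhole}), the horizontal fill lemma, the vertical selector lemma, and the binary jump sequence lemmas (Lemmas~\ref{binarylemma} and~\ref{binarylemmarev}) on which these rest, each $\tilde{S}_i$ can be run so that its first two selector sub-sequences pick the two even rows of the inner strip carrying a visited cell (filled with a $\tilde{H}$ hole sequence) and its remaining $v-2$ selectors pick the still-empty even rows (filled with a $\tilde{Z}$ fill sequence); hence every remaining empty cell of $L_i$, $S_i^a$ and $S_i^b$ is visited exactly once, and we obtain a CFP solution.

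Conversely, given a CFP solution, the Gadget construction lemma together with the same parity and boundedness observations forces any legal direction vector to follow the intended phase structure, so the frog enters $L_1,\dots,L_{n-1}$ in order, each time returning to a graph-area cell adjacent to the node it started from, while the unique odd horizontal step $\tilde{T}$ forces the last of these cells to be $t$. Since the frog never revisits a cell, the $n$ graph-area cells it occupies are distinct and hence form a Hamiltonian $s$--$t$ path --- this is precisely the lemma relating reachability of $(x_t,y_t)$ to Hamiltonian paths proved above. Together the two directions show that CFP is $\sf{NP}$--hard, and with membership in $\sf{NP}$ the theorem follows.

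The step I expect to be the main obstacle is making the ``forcedness'' used in the reverse direction airtight: ruling out every unintended move --- the frog jumping from one edge gadget directly into another, entering a cleanup gadget prematurely, or, inside a cleanup gadget, filling the wrong set of rows --- which is exactly what the odd values of the $l_i$, the blocked borders supplied by the Gadget construction lemma, and the $(1,0)$ separators between consecutive selector sequences are engineered to prevent; once those checks are in place, the argument above goes through.
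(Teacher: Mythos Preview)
Your proposal is correct and follows essentially the same route as the paper: membership in $\sf{NP}$ via verifying the direction vector, and $\sf{NP}$--hardness by appealing to the reduction already built in the preceding subsections, with both directions of the equivalence argued via the edge-gadget lemma and the cleanup machinery; your write-up is in fact more detailed than the paper's own terse proof. One small slip: the grid-graph Hamiltonian path result is due to Itai, Papadimitriou and \emph{Szwarcfiter}, not Szemer\'edi.
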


\begin{proof}
The problem is $\sf{NP}$--hard: 
by construction if the original grid graph problem has a solution, then
there is a valid traversal of the graph area and the frog can complete
the board using the cleanup gadget. If the board has a valid traversal, 
as seen above the sequences of nodes traversed in the graph area
corresponds to a Hamiltonian path in $G$ from $s$ to $t$.
The instance of the CFP can be constructed in polynomial time
because the size of the whole board is $3w+2v \times 7o_{n}w$,
where $w = 2^k-1, v = 2^{k-1}, o_n = 2n-1$, and
$2^k-1 < 8n$; so the board can be constructed in time $O(n^2)$.


The problem is in $\sf{NP}$ because a solution can easily be checked in polynomial time.
\end{proof}

\section{One dimensional variant}
\label{sec:1d}


It is easy to see that even if we restrict the board to be one dimensional
the problem remains $\sf{NP}$--complete.

\begin{theorem}The Crazy Frog Puzzle remains $\sf{NP}$--complete
even if restricted to 1-Dimensional boards ({\sc 1-D Crazy Frog Puzzle}),
i.e. boards of size $w \times 1$.
\end{theorem}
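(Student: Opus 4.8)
The plan is to reduce the two‑dimensional Crazy Frog Puzzle to the one‑dimensional version by ``unrolling'' the $w \times w$ board into a single line of length $w^2$ (plus a small amount of padding), sending the cell at coordinates $(x,y)$ to linear position $P(x,y) = y\cdot w + x$, and translating each planar jump into the corresponding one‑dimensional jump. A horizontal jump $(dx,0)$ becomes the $1$-D jump $dx$; a vertical jump $(0,dy)$ becomes the $1$-D jump $dy\cdot w$; and a diagonal jump $(dx,dx)$ becomes $dx\cdot w + dx = dx(w+1)$. The starting cell and the sequence of jumps are mapped accordingly, and every planar cell that is blocked (or empty) keeps its status at its image position; the $1$-D board has length exactly $w^2$ (we may pad with blocked cells on the left and right if a jump would formally index outside, but since every $2$-D jump respects the border this is not needed — the images stay in $[0,w^2-1]$).

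The key step is to check that this encoding is faithful in both directions. Forward direction: a valid $2$-D traversal maps cell‑by‑cell to a valid $1$-D traversal, because $P$ is a bijection from board cells to $\{0,\dots,w^2-1\}$, so ``visit every empty cell exactly once'' is preserved, and each admissible planar move maps to the prescribed $1$-D move with the same sign $s_i$. The only subtlety is that a $1$-D move of length $dx$ (the image of a horizontal jump) must not accidentally be realizable in the plane as a different legal move; but since we fix the jump \emph{lengths} in the instance and only the signs $s_i$ are free, there is nothing to rule out here — the $1$-D instance simply inherits the same sequence of (signed‑choice) jumps. Backward direction: given a valid $1$-D traversal with choices $(s_1,\dots,s_m)$, apply $P^{-1}$ to recover a sequence of planar cells; one must verify that consecutive $1$-D positions differing by $\pm dx$, $\pm dy\cdot w$, or $\pm dx(w+1)$ pull back to the correct horizontal/vertical/diagonal planar step \emph{without wrapping around a row boundary}. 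This is where the argument needs care: a horizontal jump image $\pm dx$ could in principle move from the end of one row into the next row.

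The main obstacle is therefore exactly this wrap‑around issue, and the clean way to handle it is to re‑examine the reduction of Section~\ref{sec:npc}: in that construction the board width is $w = 2^k-1$ with $4m \le 2^k$, so every horizontal jump that actually occurs has length $O(w)$ and, more importantly, the graph‑area nodes sit at coordinates that are multiples of $4$ with the target one step aside, and all horizontal jumps inside the cleanup gadgets are confined to blocks separated by blocked cells; if a horizontal jump tried to wrap, its $1$-D image would land on (the image of) a blocked cell in the adjacent row, or outside a permitted block, which is forbidden. More robustly, one can simply inflate the width: replace each row of length $w$ by a row of length $2w$ (or any bound exceeding the total horizontal displacement the frog can ever accumulate), padding the extra half of every row with blocked cells, so that no legal horizontal move can ever cross a row boundary; then $P(x,y) = y\cdot(2w) + x$ is injective on non‑blocked cells, wrap‑around is structurally impossible, and the equivalence is immediate. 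Since $w$ and $m$ are polynomial in the size of the grid‑graph instance, the resulting $1$-D board has length $O(w^3) = O(n^3)$ cells and is constructed in polynomial time; combined with membership in $\sf{NP}$ (a solution is a sign vector, checkable in linear time), this gives $\sf{NP}$-completeness of the 1‑D Crazy Frog Puzzle.
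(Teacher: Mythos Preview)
Your approach is the paper's: linearize the board via $(x,y)\mapsto x+Wy$, convert each jump $(\Delta x,\Delta y)$ to the scalar $\Delta x+W\,\Delta y$, and pad with blocked cells to prevent wrap-around. The paper carries this out slightly more cleanly by first rejecting any instance with a jump component of magnitude $\ge n$ (trivially NO) and then adding a width-$n$ blocked border on all four sides, producing a $3n\times 3n$ board mapped to a line of length $(3n)^2$; with every live coordinate in $[n,2n-1]$ and every jump component $<n$, one has $x\pm\Delta x\in[1,3n-2]$ automatically, so the encoding never crosses a multiple of $3n$ and no separate wrap-around case analysis is needed. Two small corrections to your write-up: your first proposed fix (arguing from the specific gadget layout of Section~\ref{sec:npc}) is not substantiated and should simply be dropped in favor of the padding argument; and the parenthetical about ``total horizontal displacement the frog can ever accumulate'' is off---what matters is a bound on a \emph{single} jump component, which is exactly why the explicit pre-check $|\Delta x_i|,|\Delta y_i|<n$ is needed before fixing the padding width.
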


\begin{proof}
The immediate reduction is from the {\sc Crazy Frog Puzzle}: given an instance
of the CFP, i.e. a $n \times n$ partially filled board and a sequence of jumps: $\Delta_1,\Delta_2,\dots,\Delta_m$ check the sequence and if there is a jump $(\Delta x_i,\Delta y_i)$ such that 
$dx_i \geq n$ or $dy_i \geq n$ reject (the jump brings the frog outside of the board).
Otherwise expand it to size $3n \times 3n$ adding a border of blocked cells of width $n$
in all the four directions.
Then build an equivalent one dimensional crazy frog puzzle of size $(3n)^2$ putting the lines of the expanded board side by side (cells $(x,y)$ is mapped to cell $x+3ny$)
and converting every bidimensional jump $(\Delta x_i, \Delta y_i)$ to the one dimensional jump: $\Delta x_i + 3n*\Delta y_i$.
By construction every one dimensional $\Delta x_i + 3n*\Delta y_i$ jump will lead the frog from $x_0 + 3ny_0$ to the cell $(x_0 \pm \Delta x_i) + 3n *(y_0 \pm \Delta y)$ that corresponds to the original bidimensional cell $(x_0 \pm \Delta x_i, y_0 \pm \Delta y_i)$. Borders prevent the frog
to make moves that are invalid in the corresponding bidimensional configuration;
for example using a left jump $x+3n*0$ from cell $0 + 3n*1$ (on the left border in the bidimensional board) to reach another
part of the one dimensional board.
\end{proof}

Figure~\ref{fig:1d2} shows a simple example of a $3 \times 3$ CFP
instance transformed to a 1-Dimensional CFP board of length $81$.
\begin{figure}[htb]
\centering
\includegraphics[width=14.5cm]{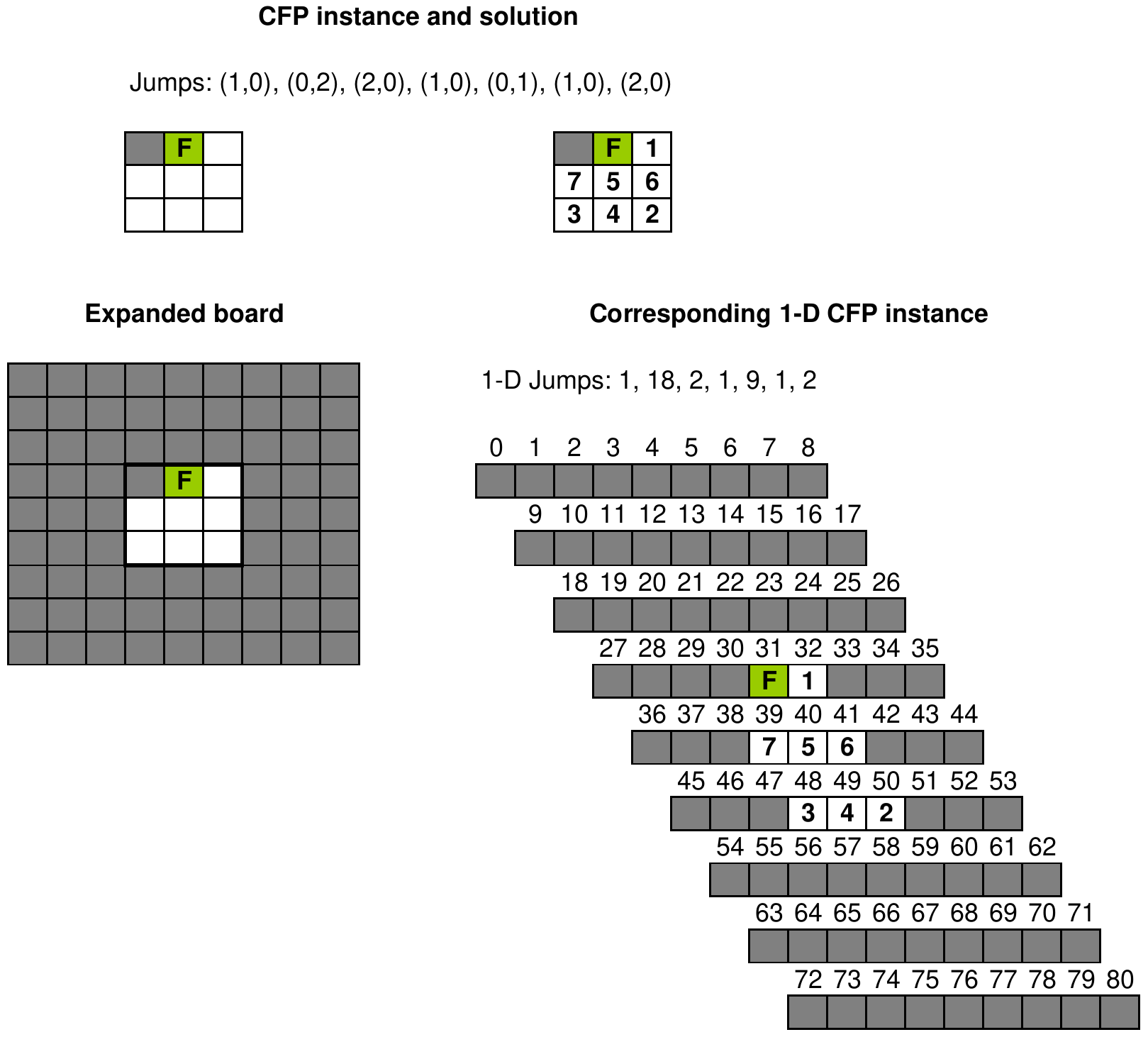}
\caption{A simple example of a $3 \times 3$ CFP
instance transformed to a 1-Dimensional CFP board of length $81$.}\label{fig:1d2}
\end{figure}

We can also fix the starting position of the frog:

\begin{lemma} Without loss of generality we can assume that
in the 1-D CFP instance the frog is placed on the 
leftmost cell.
\end{lemma}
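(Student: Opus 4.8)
The plan is to pad the given 1-D instance on the left with a block of fresh empty cells together with a prologue of forced jumps that the frog must make \emph{before} it can reach any original cell, so that the only possible behaviour of the frog on the padded instance is: start on the new leftmost cell, sweep through the padding in a forced way, arrive at the old starting cell, and then behave exactly as in the original instance. Concretely, suppose the original 1-D board has length $N$ with starting cell $c_0$. I would prepend $c_0$ new cells to the left (shifting all old coordinates right by $c_0$, so the old start is now at position $2c_0$ — actually it is cleaner to prepend a modest block and use the gadget-construction idea), and make the new leftmost cell the start. The real content is choosing a short jump sequence on the padding that is \emph{rigid}: every $s_i\in\{\pm1\}$ choice except the intended one sends the frog off the board or onto a blocked/visited cell.

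First I would recall the ``Gadget construction'' lemma and the horizontal fill/hole machinery already proved: a block of empty cells flanked by blocked cells, entered by a jump longer than the block, can be forced to be traversed internally and exited only at a designated cell. So the concrete construction is: take the original length-$N$ instance, add a blocked buffer of width $\geq N$ on the left (so no original jump can cross into the padding region in a harmful way), then to the left of \emph{that} add a single fresh empty cell which will be the new start, and a single forced jump whose length equals the distance from the new start to $c_0$. Since the new start is the unique empty cell to the left of a wide blocked buffer, the only legal first move is the one jump to $c_0$; after that the frog is exactly in the original configuration with exactly the original remaining jump sequence $\Delta_1,\dots,\Delta_m$, so the padded instance is solvable iff the original is. The number of added cells is one (plus a blocked buffer, which does not count toward $m$ since only empty cells are counted), so $m$ increases by exactly one and the reduction is clearly polynomial.

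The key steps in order: (1) state that we may first apply the reduction/normalization already in hand so that all jumps have length $<N$; (2) describe the padded board: blocked buffer $B^{\,N}$ on the left, then one empty start cell $F$ to its left, everything shifted; (3) prepend the single jump $(0,\text{dist})$ — in 1-D notation just the integer equal to the new-start-to-$c_0$ distance — to the jump sequence; (4) argue rigidity: the frog sits on the unique empty cell at the far left, a negative $s_1$ leaves the board, so $s_1=+1$ is forced and lands exactly on $c_0$; thereafter the behaviour and constraints coincide with the original instance; (5) conclude solvability is preserved in both directions and the transformation is polynomial. One should also note the new start cell gets marked visited, which is fine because it is never revisited (it lies left of the blocked buffer and no later jump can return there, as all original jumps have length $<N$).

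The main obstacle — really the only subtlety — is making sure no \emph{later} jump in the original sequence can accidentally exploit the newly added cell or reach across the blocked buffer into the padding, thereby creating spurious solutions or invalidating genuine ones. This is exactly why the buffer must be at least as wide as the longest jump (which is why step (1), normalizing so all jumps have length $<N$, is needed): with a blocked buffer of width $N$ no single jump of length $<N$ starting inside the original board can land on, or jump over onto, the new start cell, and conversely a frog that has left the new start cell can never come back. Once that width bound is in place the rigidity argument is immediate, so I expect the write-up to be short; the buffer-width bookkeeping is the one place to be careful.
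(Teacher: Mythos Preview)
Your construction is correct, but it is considerably heavier than what the paper does. The paper simply prepends \emph{one} empty cell on the left of the board---no blocked buffer at all---places the frog there, and prepends a single jump of length $a+1$ (where $a$ is the original frog's coordinate). The first jump is forced rightward and lands the frog on its old starting cell; from then on the instances coincide.

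The reason your blocked buffer and the normalization step (1) are unnecessary is that you have overestimated the ``main obstacle''. Once the frog leaves the new leftmost cell, that cell is \emph{visited}, and a visited cell is just as forbidden to land on as an off-board position. So in the paper's one-cell construction, the new position $0$ plays exactly the role that position $-1$ (off-board) played in the original instance: any later jump that would have taken the original frog off the left edge now takes the new frog onto the visited cell at position $0$, which is equally illegal. There is no need to insulate the new cell with $N$ blocked cells or to bound the jump lengths first. Your version works, and the buffer does no harm, but it obscures a two-line argument.
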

\begin{proof}
Suppose that the 1-D board is $R = (E|B)^aF(E|B)^b$ and the sequence
of jumps is: $J_1,J_2,\dots,J_m$.
We can extend the board with a cell on the left that
will be the new starting position of the frog: 
$S' = FR$ and add a jump to the sequence: $(a+1,0),J_1,J_2,\dots,J_m$.
The first jump, that must be towards the right, places the frog on the original position. 
\end{proof}

\section{Permutation reconstruction from differences}
\label{sec:permutation}

We first prove that 1-D CFP is hard even if the initial board is empty.

\begin{lemma}1-D CFP remains $\sf{NP}$-complete even if the initial board is
empty.
\end{lemma}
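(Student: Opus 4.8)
The plan is to reduce from the 1-D CFP with a fixed starting position (the leftmost cell), which we have just shown is $\sf{NP}$-complete. Given such an instance — a board $R = F(E\mid B)^{w-1}$ of length $w$ with blocked cells at a set $\mathcal{B} \subseteq \{1,\dots,w-1\}$, together with a jump sequence $J_1,\dots,J_m$ (where $m$ equals the number of empty cells) — I want to produce an equivalent instance on an \emph{empty} board. The idea is to fill in the holes: I will keep the same playing field of length $w$ but declare all its cells empty, and then bolt on a separate region whose sole purpose is to absorb exactly $|\mathcal{B}|$ extra visits, one for each formerly-blocked cell, using forced jumps so the frog cannot ``cheat'' by stepping onto an old blocked cell during the main simulation.

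The key steps, in order. First, I would append a new empty strip of length $|\mathcal{B}|$ to the right of the original board (separated, if needed, by making the whole thing one long empty line of length $w + |\mathcal{B}|$), and design a sequence of jumps that, interleaved appropriately with $J_1,\dots,J_m$, shuttles the frog out to a fresh cell in the new strip and back each time the original dynamics would have required avoiding a particular blocked cell. Concretely, for the $j$-th jump $J_j$: if the frog's intended landing cell (in either direction) is a formerly-blocked cell $b$, I replace the single jump by a short gadget of jumps — jump to the unique unused cell in the appended strip, then jump back — chosen so the arithmetic works out and so that, by the Gadget construction lemma's border trick (long forced jumps plus a surrounding buffer of… here, we must instead use the fact that the strip is an isolated empty segment and every jump into or out of it must land inside the board), the frog is forced into the strip rather than back onto the old board. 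Second, I would verify the counting: the new board has $w + |\mathcal{B}|$ cells, all empty, and the new jump sequence has length $w + |\mathcal{B}| - 1$, matching the CFP requirement. Third, I would prove the equivalence: a valid traversal of the new instance restricts to a valid traversal of the original instance avoiding $\mathcal{B}$ (because the detour jumps are forced to go to the strip and the strip cells are distinct from the old board cells), and conversely any solution of the original instance extends uniquely by taking the detours.

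I expect the main obstacle to be making the detour jumps genuinely \emph{forced} — i.e. ensuring the frog cannot use a ``detour'' jump to land back on an old board cell, nor use an original jump to sneak onto a formerly-blocked cell, without the escape valve of blocked cells that every previous gadget relied on. The cleanest fix is probably to scale up: replace the board of length $w$ by a board of length, say, $3w$ with the original content embedded in the middle third and the appended absorbing cells placed at carefully spaced positions far to the right, then choose each detour jump to be so long that its only valid landing (given the frog is near the original region and cannot leave the board) is the designated absorbing cell, and choose the return jump symmetrically; the spacing of the absorbing cells is picked, as with the $l_i = o_i\cdot 7w$ trick for edge gadgets, so that no two detours can collide and no detour jump can reach anything but its intended target. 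Once forcedness is established, the equivalence and the polynomial-time bound are routine, and combined with the preceding lemmas this yields that permutation reconstruction from differences — which, as the paper goes on to note, is exactly 1-D CFP on an empty board with the frog at an end — is $\sf{NP}$-complete.
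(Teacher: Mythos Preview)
Your approach has a genuine gap: a reduction must output a \emph{fixed} jump sequence, yet your detours are inserted ``each time the original dynamics would have required avoiding a particular blocked cell.'' Whether jump $J_j$ can land on a given blocked cell depends on the frog's position at step $j$, which in turn depends on the previously chosen signs $s_1,\dots,s_{j-1}$ --- and those are exactly what the reduction is not allowed to know. You therefore cannot decide at which indices to splice in detours, nor what lengths to give them; the scaled-up variant you sketch at the end inherits the same problem, since spacing the absorbing cells only controls \emph{where} a long jump lands, not \emph{when} to insert it. A second issue: an ``out to the strip, then back'' detour cannot literally return the frog to its departure cell (that cell is already visited), so either the detour is illegal or the frog is displaced and the next original jump $J_{j+1}$ is executed from the wrong position. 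Either way the formerly-blocked cells inside the main region are never accounted for, even though on your all-empty board they must be visited.

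The paper sidesteps interleaving entirely by pre-visiting the blocked cells \emph{before} running the original sequence. On an empty line of length $2n+1$ it prepends the jumps $d_1{+}1,d_2,\dots,d_{p+1}$ (the consecutive gaps between blocked positions), then $n{-}1$ unit steps, then a single jump of length $2n{-}1$, and only then $\tilde J$. The $n{-}1$ unit steps carve out a contiguous visited block of length $n$; because every original $J_i<n$, the rightmost board cell is reachable afterwards only if that block sits flush right, which retroactively forces every $d_i$ jump to have gone rightward --- landing precisely on the (shifted) blocked positions. The $2n{-}1$ jump then returns the frog to the original start cell, and $\tilde J$ runs on a region whose already-visited cells coincide exactly with the original blocked set. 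Nothing in this construction depends on the unknown solution.
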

\begin{proof}
Given an instance of the 1-D CFP, i.e. a configuration  $R = F\{B,E\}^{n-1}$ 
and a sequence $\tilde{J} = J_1,J_2,\dots,J_m$ of $m$ jumps;
suppose that $R$ contains $p = n - m - 1$ blocked cells at coordinates $x_1,x_2,\dots,x_p$;
let $d_1=x_1$, $d_i = x_i - x_{i-1}, i=2,3,\dots,p$, $d_{p+1}=n-x_p$.
We start with an empty line of length $2n+1$: $R' = FE^nE^n$
and extend the jump sequence in this way:
$$ \tilde{J}' = d_1+1,d_2,\dots,d_p,d_{p+1},\overbrace{1,1,\dots,1}^{n-1 \mbox{ times}}, 2n-1,\tilde{J}$$
(note that $|\tilde{J}'| = 2n$).
The $n-1$ steps forces a sequence of $n$ contiguous visited cells,
and it must be aligned with the rightmost part of the board
otherwise the frog will never be able to reach that cell during jumps $J_i$,
because $J_i < n$ (otherwise the original instance doesn't have a solution).
But, by construction, the only way to align it to the right is to make the
$d_i$ jumps towards the right, and they recreate exactly the $p$ blocked cells
of $R$. The jump $2n-1$ forces the frog to the second cell, which is also
the starting cell of the original configuration $R$.
The modified instance with the empty board has a solution if and only 
if the original instance has a solution.
\end{proof}
Figure~\ref{fig:empty} shows a 
reduction from 1-D CFP
to 1-D CFP with initial empty board.
\begin{figure}[htb]
\centering
\includegraphics[width=9cm]{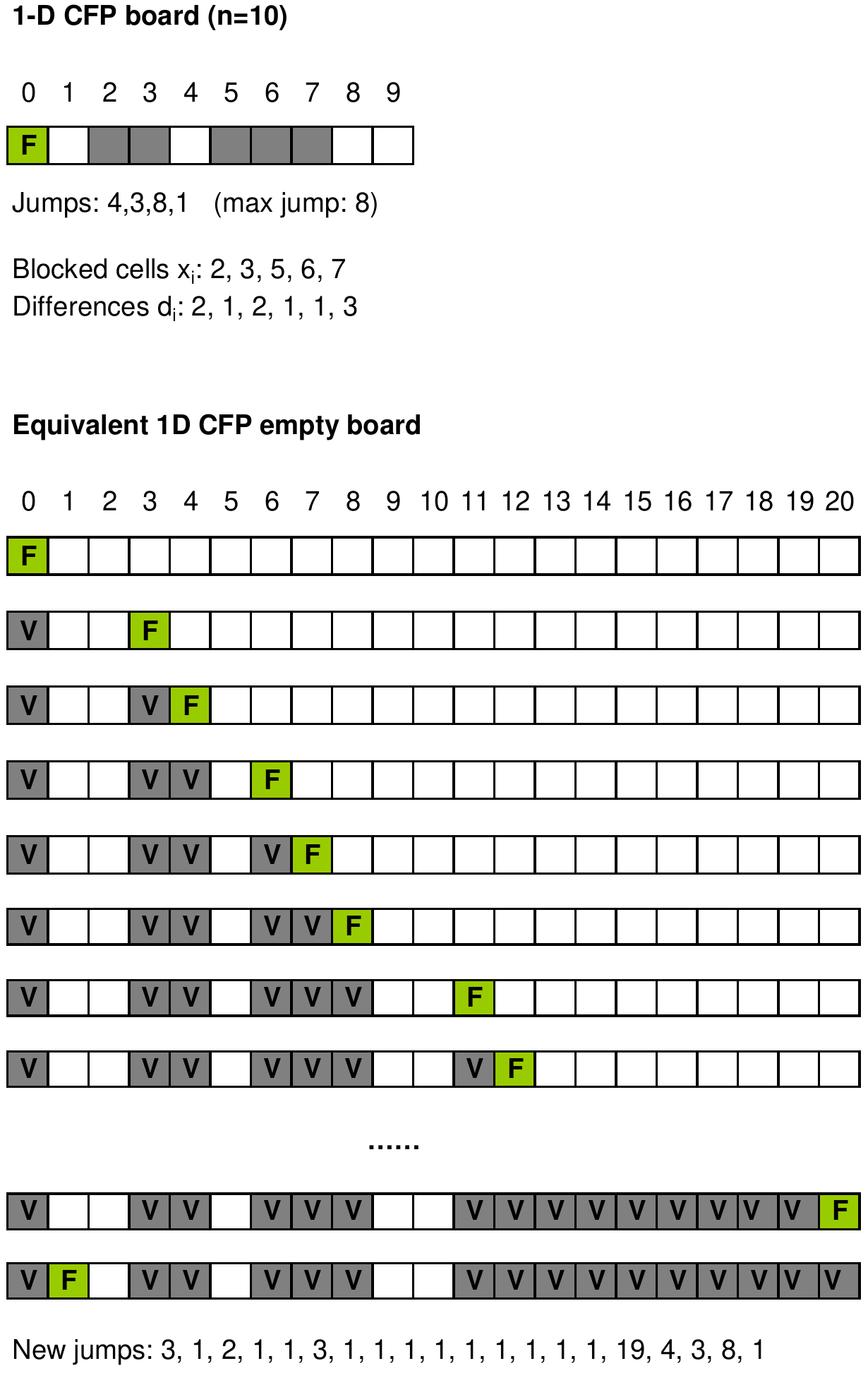}
\caption{An example of reduction from 1-D CFP
to 1-D CFP with initial empty board.}\label{fig:empty}
\end{figure}

\begin{definition}[{\sc Permutation Reconstruction from Differences}]~\\~\\
{\bf Input:} a set of $n-1$ distances $a_1, a_2,\dots,a_{n-1}$ with $a_i>0$\\
{\bf Question:} does there exist a permutation $\pi_1,\dots,\pi_n$ of the integers $[1..n]$
such that $| \pi_{i+1} - \pi_i| = a_i$, $i=1,\dots,n-1$?
\end{definition}
\smallskip
Note that if $\pi_1,\dots,\pi_n$ is a valid solution,
then the \emph{mirrored} sequence $n - \pi_1 + 1, n - \pi_2 + 1, \dots, n - \pi_n + 1$
is also a valid solution.
Figure~\ref{fig:prd} shows an example of a permutation reconstruction from differences problem.
The reduction from the 1-D CFP with initial empty board to the permutation reconstruction from differences problem is straightforward.

\begin{figure}[htp]
\centering
\includegraphics[width=7cm]{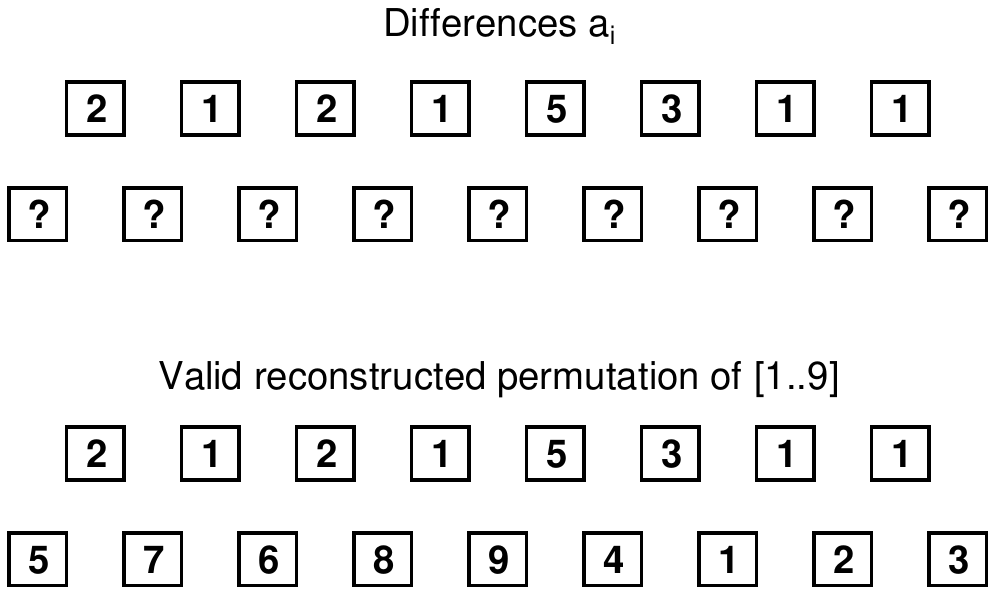}
\caption{An instance of the Permutation Reconstruction from Differences problem: the differences $a_i$ are $(2,1,2,1,5,3,1,1), n = 9$
and the reconstructed permutation is $(5,7,6,8,9,4,1,2,3)$;
the mirrored valid permutation is $(5,3,4,2,1,6,9,8,7)$ .}\label{fig:prd}
\end{figure}

\begin{theorem}
{\sc Permutation Reconstruction from Differences} (PRD) is $\sf{NP}$--complete.
\end{theorem}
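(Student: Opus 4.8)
The plan is to derive \textsc{PRD} directly from the chain of lemmas just established, since the excerpt has already done essentially all the structural work. We know (1) \textsc{1-D Crazy Frog Puzzle} is $\sf{NP}$--complete, (2) it remains so when the frog starts on the leftmost cell, and (3) it remains so when the initial board is empty. So the last link is to observe that a 1-D CFP instance with an empty board of length $n$, frog on the leftmost cell, and jump sequence $J_1,\dots,J_{n-1}$ is literally the same object as a \textsc{PRD} instance with distances $a_i = J_i$: a solution to the puzzle is a choice of signs $s_i\in\{\pm1\}$ so that the positions $c_0=1$, $c_i = c_{i-1}+s_iJ_i$ are all distinct and stay in $\{1,\dots,n\}$; since there are exactly $n$ of them, being distinct and in range means they form a permutation of $[1..n]$, and $|c_i-c_{i-1}| = |s_iJ_i| = J_i = a_i$. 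Conversely a permutation $\pi$ realizing the distances $a_i$ gives the sign choices $s_i = \mathrm{sign}(\pi_{i+1}-\pi_i)$ and a valid frog traversal (the board is empty so there are no blocked cells to avoid, and distinctness of the $\pi_i$ encodes the ``no already-visited cell'' rule).

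First I would state that \textsc{PRD} is in $\sf{NP}$: a candidate permutation $\pi_1,\dots,\pi_n$ is a polynomial-size certificate, and checking that it is a permutation of $[1..n]$ and that $|\pi_{i+1}-\pi_i| = a_i$ for all $i$ takes linear time. Then for $\sf{NP}$--hardness I would give the reduction from 1-D CFP with empty board and frog on the leftmost cell, as above: map the jump lengths to the distances, argue the forward direction (puzzle solution $\Rightarrow$ permutation) and the backward direction (permutation $\Rightarrow$ puzzle solution) using the correspondence $s_i \leftrightarrow \mathrm{sign}(\pi_{i+1}-\pi_i)$, and note the reduction is trivially polynomial (indeed it is essentially the identity map on the input). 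One small point to handle cleanly is the indexing/translation: the frog's cells run over $\{1,\dots,n\}$ (or $\{0,\dots,n-1\}$, depending on convention) and a \textsc{PRD} permutation runs over $[1..n]$; these differ only by a harmless additive shift, which I would just mention in passing.

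Honestly, there is no real obstacle here — the theorem is a corollary of the lemma chain, and the author even flags this by writing ``the reduction $\dots$ is straightforward.'' The only thing that needs a moment's care is making the ``distinct $+$ in range $\Rightarrow$ permutation'' counting argument explicit, and making sure the two directions of the equivalence are both spelled out rather than just one; a sloppy writeup might assert the bijection without checking that a \textsc{PRD} solution never forces the frog onto an out-of-range cell (it cannot, precisely because $\pi_i\in[1..n]$ for all $i$) or onto a repeated cell (it cannot, because $\pi$ is a permutation). So the ``hard part,'' such as it is, is simply being disciplined about stating both implications of the equivalence between the empty-board 1-D CFP and \textsc{PRD}, and then invoking the previously proved $\sf{NP}$--completeness of the former.
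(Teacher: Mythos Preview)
Your reduction has a genuine gap in the backward direction. You identify the empty-board 1-D CFP instance (frog on the leftmost cell, jumps $J_1,\dots,J_{n-1}$) with the PRD instance $a_i=J_i$, but these are \emph{not} equivalent: the CFP forces the traversal to start at cell~$1$, whereas PRD imposes no constraint on $\pi_1$. Concretely, take $n=4$ and differences $(2,3,2)$. The permutation $(2,4,1,3)$ is a valid PRD solution, yet the corresponding CFP instance has none: from cell~$1$ the first jump must land on cell~$3$, and then the jump of length~$3$ leaves the board in either direction. So your map can send a NO instance of CFP to a YES instance of PRD, and the hardness argument collapses. Mirroring does not save you here, since the mirror of a permutation with $\pi_1=2$ has $\pi_1=n-1$, still not an endpoint.

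The paper's fix is exactly the device you are missing: it prepends one extra difference $a_1=n$ and asks for a permutation of $[1..n{+}1]$. Because $|\pi_2-\pi_1|=n$ forces $\{\pi_1,\pi_2\}=\{1,n{+}1\}$, after possibly mirroring we may assume $\pi_1=n{+}1$ and $\pi_2=1$; then $\pi_2-1,\dots,\pi_{n+1}-1$ is a frog traversal of $[0..n{-}1]$ starting at~$0$ with jump lengths $J_1,\dots,J_{n-1}$. Your forward direction and the $\sf{NP}$ membership argument are fine; only this anchoring step needs to be added.
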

\begin{proof}
Given an instance of the 1-D CFP with initial empty board of length $n$ and jumps $J_1,J_2,\dots,J_{n-1}$
it has a solution if and only if a valid permutation of $[1..n+1]$ can be reconstructed
from differences $a_1 = n$ and $a_{i} = J_{i-1}, i = 2,\dots,n$.

($\Rightarrow$) The frog visits all the cells of the board exactly once,
so its positions $x_i,\; i=1,\dots,n$ 
during the traversal (where $x_1=0$ is its starting position)
is a permutation of $[0..n-1]$ and it can be transformed
to a permutation of $[1..{n+1}]$ 
setting $\pi_{1}=n+1$ and $\pi_i = x_{i-1}+1,\; i=2,\dots,n+1$ .

($\Leftarrow$) Suppose that  $\pi_1,\dots,\pi_{n+1}$  is a valid permutation
that satisfy the difference constraints;
we have that $\pi_1 $ must be $1$ or $n+1$ because the first difference
$a_1 = n$. Suppose that $\pi_1=n+1$, then $\pi_2 = 1$ and $\pi_2 - 1,\ldots,\pi_{n+1}-1$ are a valid sequence of positions of the frog because
$|(\pi_{3}-1) - (\pi_{2}-1)| = J_1, |(\pi_{4}-1) - (\pi_{3}-1)| = J_2,\ldots$,
and they represent a valid solution to the 1-D CFP instance, too: 
the sign of jump $J_i$ is positive if $\pi_{i+2}>\pi_{i+1}$, negative
otherwise. If $\pi_1 = 1$ we can simply mirror
the values replacing every $\pi_i$ with $\pi'_i = (n+1) - \pi_i +1$ because
their absolute differences don't change. 

\end{proof}

\section{Conclusion}

We proved the hardness of a simple problem on permutations
that could shed light on other combinatorial or
arithmetic open problems.
For example there could be a correlation with the graceful
labeling problem,  indeed if the $a_i$ are themselves
a permutation of $[1..n]$ (all values are distinct) then the
permutation reconstruction from differences (PRD) problem
is equivalent to verify that the sequence is a graceful
labeling of the line of $n+1$.
So it would be interesting to study some restricted versions of 
the PRD problem; for example what is its complexity if the differences
are from a finite set of size $k$.  
As an intermediate step we introduced a new addictive
puzzle game that we hope will be soon playable online or
as a smartphone application.


\end{document}